\let\OldStatex\Statex
\renewcommand{\Statex}[1][3]{%
  \setlength\@tempdima{\algorithmicindent}%
  \OldStatex\hskip\dimexpr#1\@tempdima\relax}
\algnewcommand{\A}{\textbf{and}\space}
\algnewcommand{\Or}{\textbf{or}\space}
\algnewcommand{\Xor}{\textbf{xor}\space}
\global\long\def\w{\omega}
\global\long\def\st{\sqrt{2}}
\global\long\def\ve{\varepsilon}
\global\long\def\vea{\ve^{\ast}}
\global\long\def\th{\theta}
\global\long\def\dg{\dagger}
\global\long\def\d{\delta}
\global\long\def\ks{k^{\ast}}
\global\long\def\p{\phi}
\global\long\def\Rzp{R_z(\p)}
\global\long\def\Z{\mathbb{Z}}
\global\long\def\R{\mathbb{R}}
\global\long\def\T{\mathcal{T}}
\global\long\def\N{\mathcal{N}}
\global\long\def\U{\mathcal{U}}
\global\long\def\Um{\mathcal{\U}^{\min}}
\global\long\def\Zr{\mathbb{Z}\left[\w\right]}
\global\long\def\Zf{\mathbb{Z}[i, \frac{1}{\sqrt{2}}]}
\global\long\def\md{\,\mathrm{mod}\,}
\global\long\def\sde{\mathrm{sde}}
\global\long\def\sol{\partial}
\global\long\def\re{\mathrm{Re}}
\global\long\def\im{\mathrm{Im}}
\global\long\def\l{\left}
\global\long\def\r{\right}
\global\long\def\lf{\lfloor}
\global\long\def\rf{\rfloor}
\global\long\def\BFS{\mathrm{BFS}}
\global\long\def\Zw{\mathbb{Z}\left[\w\right]}
\global\long\def\Zs{\mathbb{Z}\left[\sqrt{2}\right]}
\global\long\def\ket#1{\left|#1\right\rangle }
\global\long\def\set#1#2{\l\{\l.#1\,\r|\,#2\r\}}
\global\long\def\sets#1#2{\l\{#1\,\l|\,#2\r.\r\}}
\newtheorem{thm}{Theorem}
\newtheorem{prob}{Problem}
\newtheorem{lem}{Lemma}
\newtheorem{prop}{Proposition}
\pgfplotsset{compat=newest}
\begin{document}

%BEGIN TITLE

\title{Practical approximation of single-qubit unitaries by single-qubit quantum Clifford and T circuits}

\author{Vadym~Kliuchnikov, Dmitri~Maslov, Michele~Mosca% <-this % stops a space
\IEEEcompsocitemizethanks{
\IEEEcompsocthanksitem V.~Kliuchnikov is with Institute for Quantum Computing, and David R. Cheriton School of
Computer Science, University of Waterloo, Waterloo, Ontario, Canada\protect\\
E-mail: v.kliuchnikov@gmail.com
\IEEEcompsocthanksitem D.~Maslov is with National Science Foundation, Arlington, Virginia, USA\protect\\
E-mail: dmitri.maslov@gmail.com
\IEEEcompsocthanksitem M.~Mosca is with Institute for Quantum Computing, and Dept. of Combinatorics \& Optimization, University of Waterloo, Waterloo, Ontario, Canada, and Perimeter Insitute for Theoretical Physics, Waterloo, Ontario, Canada\protect\\
E-mail: michele.mosca@uwaterloo.ca
}% <-this % stops a space
\thanks{}}

% The paper headers
\markboth{}%Journal ,~Vol.~6, No.~1, January~2007}%
{V.~Kliuchnikov \MakeLowercase{\textit{et al.}}: Practical approximation of single-qubit unitaries...}

%END TITLE

%BEGIN ABSTRACT 
\IEEEcompsoctitleabstractindextext{%
\begin{abstract}
We present an algorithm, along with its implementation that finds T-optimal approximations of single-qubit Z-rotations using quantum circuits consisting of Clifford and T gates.  Our algorithm is capable of handling errors in approximation down to size $10^{-15}$, resulting in optimal single-qubit circuit designs required for implementation of scalable quantum algorithms.  Our implementation along with the experimental results are available in the public domain.
\end{abstract}

%\begin{keywords}
%\end{keywords}
}

%END ABSTRACT

%BEGIN PREAMBLE
% make the title area
\maketitle

\IEEEdisplaynotcompsoctitleabstractindextext

\IEEEpeerreviewmaketitle

%END PREAMBLE

%BEGIN INTRO
\section{Introduction}
\IEEEPARstart{Q}{uantum} computing is a recent computing paradigm using the laws of quantum mechanics as a basis for computation.  The following two observations explain the interest in the study of this computing model. First, it has been shown that quantum algorithms can solve certain computational problems more efficiently than the best known classical algorithms.  The speed-up provided by quantum algorithms is sometimes quite significant, including superpolynomial for the well-known integer factorization problem (more generally, the hidden subgroup problem over Abelian groups; the original quantum algorithm is best known as Shor's algorithm).  Second, small quantum computations have already been demonstrated in experiments, and recent results in scaling and fault tolerance suggest the possibility of a full-scale quantum computation.  As a result, quantum computations may one day become a hardware platform capable of substantially speeding up certain computations in ways classical computation is 
believed to be incapable of.  

Much like any classical algorithm, a quantum algorithm needs to be implemented efficiently in order to gain maximal possible advantage from executing it.  Typically, a quantum algorithm is described in terms of high level procedures such as arithmetic operations (addition, multiplication, exponentiation) or special purpose transforms, such as the Quantum Fourier Transform (QFT).  These large transforms are then decomposed into high level logical gates, such as Toffoli, Fredkin, SWAP, arbitrary two-qubit gates, including controlled versions of the above, {\em etc.}, and finally broken down into circuits over elementary logical gates.  The set of the elementary logical gates allowed is dictated by the fault-tolerance techniques that limit the efficiency of implementing an arbitrary transformation.  Recent studies of fault tolerance techniques suggest that the fault-tolerant library should consist of Clifford (single-qubit Pauli, Hadamard, Phase, CNOT gates) and T logical gates, with the understanding that 
the T gate 
requires considerably more resources than any of the Clifford gates \cite{quant-ph/0504218, arXiv:0803.0272, bk:nc}.  Consequently, since recently, it has become widely accepted that the T-gate count/depth may serve as a good first-order approximation of the resource count required to physically implement a quantum circuit. 

In this paper we study the problem of the optimal single-qubit gate approximation by Clifford and T circuits.  Single-qubit gates arise in a variety of contexts within quantum algorithms, most notably, in the Quantum Fourier Transform (per \cite{arXiv:1206.5236}, controlled-Z rotations can be implemented by reducing them to Fredkin and one single-qubit gate; it is the single-qubit gate that requires approximation and consumes most resources), and quantum simulations \cite{chem,chem-small}.  Interestingly, in both cases the single qubit gates required are the rotations around the axis $Z$, which are the gates we approximate optimally in this paper.

A solution to the single-qubit circuit approximation problem in the form of a brute force search to find optimal circuits was suggested by Fowler \cite{F1}. However, brute force search appears to run out of classical computational resources for approximation error values below $10^{-4}$. The approximation precision can be improved using the Solovay-Kitaev algorithm \cite{bk:ksv,DN}. When using it the resulting circuit size scales as $O(\log^{3.97}(1/\ve))$ instead of the optimal scaling $O(\log(1/\ve))$. In contrast, our algorithm is capable of handling precision down to $10^{-15}$ and producing optimal results, and is thus suited for application to scalable quantum computing.  Next, an exact synthesis algorithm has been developed to synthesize unitaries over the ring $\Z[i, 1/\sqrt{2}]$~\cite{arXiv:1206.5236}.  This algorithm synthesizes circuits that are both T- and H-optimal.  However, it does not answer the question of how to efficiently approximate a single-qubit unitary whose elements lie outside the 
ring $\Z[i, 1/\sqrt{2}]$, and in most practical situations this is precisely the case.   We rely on this latter algorithm in our paper, as well as on the observation made in~\cite{arXiv:1206.5236} that finding an approximating 
circuit is as difficult as finding the approximating unitary.  More recently, \cite{arXiv:1212.0822} developed an algorithm for finding asymptotically optimal single-qubit circuit approximations by using a few ancillae.  While this means that the resource count asymptotics have thus been settled up to a constant factor in a scalable (polynomial in $\log{(1/\ve)}$ classical resources required to synthesize an asymptotically optimal quantum circuit) fashion, those constant factors matter in the actual implementations.  At approximately the same time as our original posting of this work, a new result appeared \cite{S}, that shows how to approximate single-qubit unitaries with an approximately $33\%$ overhead compared to the optimal results, and using no ancillae.  While the focus of \cite{S} appears to be on scaling to handle very tiny errors, our focus is on minimizing the quantum computing resources for implementation sizes of foreseeable practical importance.  Indeed, we obtain 
optimal circuits, meaning further simplification is impossible unless other additional resources are allowed.  We also highlight that a number of approaches have been developed in the literature that use additional resources in the form of ancillae, special states, classical feedback, or whose application results in a probabilistic success of having approximated a target unitary \cite{DCS,PS,BGS,J,bk:ksv,WK,J2}.  In contrast to those publications, our focus is on solving the basic version of the synthesis problem---the one requiring only the necessary resources, and doing so optimally. 

As illustrated in Section \ref{sec:exp}, our implementation is capable of synthesizing optimal implementations for error sizes down to $10^{-15}$.  To calculate how small of an error one might need to approximate a single-qubit unitary to, consider Shor's integer factoring algorithm.  Suppose we want to factor a 1,000,000-bit number, and the effect of the error due to gate approximation is required to be negligible, e.g., 0.01\%.  The number of single-qubit gates requiring approximation is about $2n\log{n}|_{n=1,000,000}\approx 4{\cdot}10^7$.  Assuming the errors add up, the precision of each individual gate does not need to be smaller than $10^{-12} < \frac{0.0001}{4{\cdot}10^7}$.  As such, since our algorithm capable of approximating single-qubit unitaries to error $10^{-15}$, it can be readily used to approximate the QFT (the modular exponentiation can be implemented exactly and requires no approximations) part of Shor's algorithm that factors 1,000,000-bit numbers.

The above simple rough calculation motivated our decision to invest additional resources into the calculations in exchange for a higher quality output (which lead to a lower cost quantum circuit).  In particular, we noted that we can manage precisions of practical importance, and have thus invested the additional time into computing the best approximating unitary---our results are accompanied by the optimality guarantee.  Furthermore, in the above calculation we assumed that the errors add up in the worst possible way.  Naturally and for most applications and approximations this is unlikely to be the case, as random and independent noise scales as the square root of the sum of absolute values of all errors.  Our algorithm may furthermore be easily updated to provide slightly (on the order of 5\%) suboptimal implementations and draw a random one, thereby providing a way to grow logical error by a sublinear function of the sum of errors and potentially resulting in significant advantage via savings 
in approximating with a much larger error.

\section{Preliminaries}
In this section we review basic concepts with the goal of introducing the notations.  For an in-depth review, please see \cite{bk:nc}. 

While the state of a classical bit can be either $0$ or $1$, the state of a quantum bit, or {\it qubit}, is described by a unit vector in the two-dimensional complex vector space $\mathbb{C}^2$.  It is common to use the notation $\ket{0}$ and $\ket{1}$ to denote an orthonormal basis of the state space and refer to it as the computational basis.  The state of the system of $n$ qubits is described by a unit vector that belongs to the $n$-fold tensor product of the two-dimensional complex spaces.  If the state of the first qubit is $\alpha \ket{0}+ \beta \ket{1}$ and the state of the second qubit is 
$\ket{0}$, then the state of the corresponding two-qubit system is 
\[
 \left( \alpha \ket{0}+ \beta \ket{1} \right) \otimes \ket{0} \in \mathbb{C}^2\otimes\mathbb{C}^2 \cong \mathbb{C}^4. 
\]
In the circuit model, quantum computation is performed by applying unitary operators (linear operators that preserve the usual inner product between vectors) to the state vector at discrete time steps.  A unitary operator $U$ applied to a single qubit corresponds to the tensor product $U\otimes I$ 
with identity on the rest of the qubits.  For example, when one applies $U$ 
to the first qubit and the system is in the state mentioned above, the result becomes 
\[
  U \otimes I \left( \left( \alpha \ket{0}+ \beta \ket{1} \right) \otimes \ket{0} \right) =  \left( \alpha U \ket{0}+ \beta U \ket{1} \right) \otimes \ket{0}.
\]

\begin{figure}
\raisebox{-0.4\height}{\includegraphics[scale=0.4]{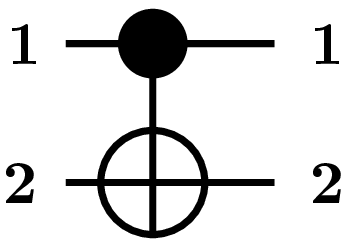}}
\hfill{}%
$\left(\begin{array}{cccc}
1 & 0 & 0 & 0\\
0 & 1 & 0 & 0\\
0 & 0 & 0 & 1\\
0 & 0 & 1 & 0
\end{array}\right)$
\hfill{}%
$\begin{array}{c}
\ket{00}\rightarrow\ket{00} \\
\ket{01}\rightarrow\ket{01} \\
\ket{10}\rightarrow\ket{11} \\
\ket{11}\rightarrow\ket{10}
\end{array}$

\caption{\label{fig:CNOT}CNOT gate with control on the first qubit and target on the second qubit. From
left to right: diagram of the gate, its corresponding unitary, action
on the computational basis of the two qubit state space.}
\end{figure}

Depending on the gate library used, it is not always possible to implement 
a unitary operator exactly, and an approximation is used instead, meaning those unitaries that cannot be implemented exactly are replaced with the ones implementable and close in some distance defined on the unitaries. 

The gate library we focus on is Clifford+T. It consists of the following single-qubit gates
\[
 X:= \l(\begin{array}{cc} 0 & 1 \\ 1 & 0 \\ \end{array}\r), Z:= \l(\begin{array}{cc} 1 & 0 \\ 0 & -1 \\ \end{array}\r), Y:= iXZ
\]
\[
 H:= \frac{1}{\sqrt{2}}\l(\begin{array}{cc} 1 & 1 \\ 1 & -1 \\ \end{array}\r),
  T:= \l(\begin{array}{cc} 1 & 0 \\ 0 & e^{i\pi/4} \\ \end{array}\r), P:=T^2
\]
and the CNOT gate~(see Fig.~\ref{fig:CNOT}), which is a two-qubit gate.  The above gates, except T gate, are all Clifford gates and typically are easier to implement fault-tolerantly than the T gate.  The H gate is also called the Hadamard gate and the P gate is also called the Phase gate; X, Y, and Z are known as single-qubit Pauli gates. 
 
The result of the computation is obtained by the measurement of the resulting state.  For a single-qubit state $\alpha\ket{0}+\beta\ket{1}$, the 
probability of the outcome $0$ is $|\alpha|^2$, and the probability of the outcome $1$ is $|\beta|^2$.  In general, the result of an $n$-qubit computation 
is a probability distribution on all Boolean $n$-bit strings.  The precision that we want to achieve during the approximation procedure is determined 
by the precision of the resulting distribution of the outcomes required to successfully obtain the correct answer.

%It is important to note that the efficiency of a quantum algorithm is determined by the number of single- and two-qubit gates used from the available gate set. A limitation to single- and two-qubit elementary operations is natural, as possible physical interactions often correspond to operations on a few qubits. Furthermore,~\cite{Barenco1995} shows that it is possible to implement multiple qubit unitaries using CNOT and single qubit gates.  Since CNOT is often considered to be simple (e.g., due to its transversality in many fault tolerant implementations), it is included in many logical gate libraries.  Thus, the problem of implementation of the single qubit unitaries becomes important.

% COMMENTL: Above paragarph removed, because I do not believe it is true that physical operations result in operations over a few logical qubits.  Logical information is spread across multiple physical qubits, and each physical qubit 'owns' a piece of information on a number of logical qubits.  Performing an operation on a physical qubit thus affects multiple logical qubits.  Furthermore, in some QIP the easiest operations are "mass" operations, such as, e.g., applying R_X to all Hydroden nuclei in liquid state NMR.  

Any single-qubit unitary can be decomposed in terms two Hadamard gates and Z-rotations
\[
\Rzp := \left(\begin{array}{cc}
e^{-i\p/2} & 0\\
0 & e^{i\p/2}
\end{array}\right)
\]
(see, for example, \cite{bk:ksv}, solution to Problem 8.1).  Therefore, the ability to approximate $\Rzp$ implies the ability to approximate any single-qubit unitary.  In addition, $\Rzp$ are common single-qubit rotations used in many quantum algorithms.  For example, they are used in the Quantum Fourier Transform---an important ingredient in a number of quantum algorithms, as well as in quantum chemistry simulations.  In this paper we focus on optimal approximation of $\Rzp$ by Clifford and T gates, and thereby address the practical needs. 

%END INTRO

%BEGIN MAIN RESULTS
\section{Main results}
In this section we describe our main result---the algorithm for approximating single-qubit rotations $\Rzp$ using Clifford and T circuits. The algorithm is based on exact synthesis results \cite{arXiv:1206.5236}. In particular, it was previously shown~\cite{arXiv:1206.5236} that any single-qubit unitary can be represented by a Clifford and T circuit if and only if it has the following form: 
\[
 U[x,y,k] = \left( \begin{array}{cc}
x & -y^{\ast}\w^k \\
y & x^{\ast}\w^k \\
\end{array} \right),  
\]
where $x, y \in \Zf$ and $\w := e^{i\pi/4}$. We call these unitaries \emph{exact}, implying that the unitaries can be \emph{exactly} represented by the respective circuits, and do not require to rely on approximation.  To approximate $\Rzp$ we first find a unitary $U[x,y,k]$ that is close to it and then use the exact-synthesis algorithm from~\cite{arXiv:1206.5236} to find a circuit implementing $U[x,y,k]$ with the optimal number of H and T gates.  With the above exact synthesis results~\cite{arXiv:1206.5236} we reduce the problem of finding the best approximation by a Clifford and T circuit to the problem of finding the best approximation by an exact unitary. We call this problem the Closest Unitaries Problem ($\mathrm{CUP}$). 

%BEGIN CUP
\subsection{Closest Unitaries Problem}
Here we define the Closest Unitaries Problem formally and briefly discuss why it is easier to solve this problem than the similar problem involving circuits.  We use global phase invariant distance to measure the quality of approximation. It is defined on single-qubit unitaries as
\[
d(U,V) := \sqrt{1-\left|tr(UV^\dg)\right|/2}. 
\]
Motivated by the relative difficulty of implementing the T gate in practice, we aim to find the best approximation using at most the given number of T gates. We next introduce the T-count, $\T(U)$, to be the minimal number of T gates required to implement $U$ up to the global phase as a circuit over the Clifford and T gate library.  In other words, $U$ can be written in the form 
\[
 e^{i\alpha} C_1 T C_2 T \ldots C_n T C_{n+1},
\]
where $e^{i\alpha}$ is some constant (global phase), $C_i$ are Clifford unitaries, $n=\T(U)$, and $U$ cannot be written in the above form for any $n<\T(U)$. 

\begin{prob}
$\mathrm{CUP}[n,\p]$ (Closest Unitaries Problem) is the problem of finding:
\begin{itemize}
 \item the distance $\ve[n,\p]$ between $\Rzp$ and the set of exact unitaries with T-count at most $n$,
 \item  the subset $D[n,\p]$ of all exact unitaries with T-count at most $n$ and within distance $\ve[n,\p]$ from $\Rzp$; T-count of all elements of $D[n,\p]$ must be minimal. 
\end{itemize}
\end{prob}
The requirement for the elements of $D[n,\p]$ to have the minimal T-count is non-trivial.  This is because the set of all exact unitaries with T-count at most $n$ and within distance $\ve[n,\p]$ from $\Rzp$ may contain unitaries with different T-count, as is illustrated in the example we give in the next subsection (formulas (\ref{ex:exampl})). 

A na\"{\i}ve brute-force solution to $\mathrm{CUP}[n,\p]$~\cite{F1} requires one to enumerate all exact unitaries with T-count at most $n$. Our algorithm allows us to significantly reduce the size of the search space used for solving $\mathrm{CUP}[n,\p]$ when $\ve[n{-}1,\p]$ is known. Informally, if one has already solved  $\mathrm{CUP}[n{-}1,\p]$ there is no need to solve  $\mathrm{CUP}[n,\p]$ from scratch: one just needs to check if using exact unitaries with T-count at most $n$ instead of exact unitaries with T-count at most $n{-}1$ allows them to improve the quality of approximation over previously achieved $\ve[n{-}1,\p]$. It is much easier to accomplish this when approximating by unitaries compared to approximating by circuits.  

In the next section we describe in more detail the problem that we need to solve on top of $\mathrm{CUP}[n{-}1,\p]$ to find the solution to $\mathrm{CUP}[n,\p]$. We call this problem the Restricted Closest Unitaries Problem. 
%END CUP

%BEGIN RCUP
\subsection{Restricted Closest Unitaries Problem}
We introduce the notion of minimal unitaries that is crucial for the definition of the Restricted Closest Unitaries Problem~($\mathrm{RCUP}$) and use it to show the relation between $\mathrm{CUP}$ and $\mathrm{RCUP}$. The definition of the minimal unitaries is motivated by the fact that the distance between the exact unitary $U[x,y,k]$ and $\Rzp$ can be written as 
\begin{equation}\label{eq:dist}
  d(\Rzp,U[x,y,k]) = \sqrt{1-\l|\re(xe^{i\p/2}\w^{-k/2})\r|}.
\end{equation}
In particular, we see that the distance depends only on $x$ but not on $y$. We say that unitary $U[x,y,k]$ is \emph{minimal} if its T-count is equal to the minimum of T-counts over all unitaries of the form $U[x,y',k]$ for $y'$ from $\Zf$. Below is an example of the minimal and non-minimal unitaries that we found while approximating $R_z(\pi/16)$: 
\begin{equation}\label{ex:exampl}
\begin{array}{l}
 U[3{+}5\w{-}3\w^2{-}2\w^4,-2{+}2\w^2{-}3\w^3,0]/8 \text{and} \\
 U[3{+}5\w{-}3\w^2{-}2\w^4, 3{-}2\w{+}2\w^3,0]/8.
\end{array}
\end{equation}
The first unitary has T-count of 10 and the second has the T-count of 12.  

We next state the RCUP using the notion of minimal unitaries: 
\begin{prob}
$\mathrm{RCUP}[n,\p,\d]$ (Restricted Closest Unitaries Problem) is the problem of finding: 
\begin{itemize}
 \item The distance $\ve[n,\p,\d]$ between $\Rzp$ and the set of minimal exact unitaries within distance $\d$ from $\Rzp$ and with T-count equal to $n$ (in the case if there are no such unitaries we define $\ve[n,\p,\d]:=\d$),
 \item The set $D[n,\p,\d]$ of minimal exact unitaries within distance $\ve[n,\p,\d]$ from $\Rzp$ and with the T-count equal to $n$. 
\end{itemize}
\end{prob}

The following Lemma establishes the relation between CUP and RCUP.
\begin{lem}\label{lem:cup}
$\mathrm{CUP}[n,\p]$ reduces to $\mathrm{CUP}[n{-}1,\p]$ and $\mathrm{RCUP}[n,\p,\d]$ for $\d=\ve[n{-}1,\p]$ as follows:
\begin{itemize}
 \item If $\ve[n,\p,\d] \ge \ve[n{-}1,\p]$, then $\ve[n,\p]=\ve[n{-}1,\p]$ \\ and $D[n,\p]=D[n{-}1,\p]$.
 \item If $\ve[n,\p,\d] < \ve[n{-}1,\p]$, then $\ve[n,\p]=\ve[n,\p,\d]$\\ and $D[n,\p]=D[n,\p,\d]$.
\end{itemize}
\end{lem}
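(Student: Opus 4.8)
The plan is to exploit two structural facts. First, by set inclusion the exact unitaries with T-count at most $n{-}1$ form a subset of those with T-count at most $n$, so $\ve[n,\p] \le \ve[n{-}1,\p]$ holds unconditionally. Second, by (\ref{eq:dist}) the distance $d(\Rzp, U[x,y,k])$ depends on $x$ and $k$ alone and is completely insensitive to the choice of $y$. Together these reduce the question ``can passing from T-count $n{-}1$ to T-count $n$ improve the approximation?'' to a statement about unitaries of T-count exactly $n$, which is precisely what RCUP isolates.

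The technical heart of the argument is an observation I would establish first: any exact unitary $U[x,y,k]$ of T-count exactly $n$ whose distance to $\Rzp$ is strictly smaller than $\ve[n{-}1,\p]$ must be minimal. Indeed, were it not minimal there would exist $y' \in \Zf$ with $U[x,y',k]$ of strictly smaller T-count, hence of T-count at most $n{-}1$; but by the $y$-independence of the distance this $U[x,y',k]$ sits at the same distance $< \ve[n{-}1,\p]$ from $\Rzp$ while having T-count at most $n{-}1$, contradicting the definition of $\ve[n{-}1,\p]$ as the minimum distance attainable at T-count at most $n{-}1$. The same bookkeeping shows that any such $U[x,y,k]$ automatically lies within distance $\d = \ve[n{-}1,\p]$ of $\Rzp$ and has T-count equal to $n$, so it is exactly the kind of unitary over which $\mathrm{RCUP}[n,\p,\d]$ searches. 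Consequently, whenever $\ve[n,\p,\d] < \ve[n{-}1,\p]$, the minimum distance over all T-count-$n$ unitaries coincides with $\ve[n,\p,\d]$.

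With this in hand I would split into the two stated cases. If $\ve[n,\p,\d] \ge \ve[n{-}1,\p]$, the contrapositive of the observation shows that no T-count-$n$ unitary achieves distance below $\ve[n{-}1,\p]$; combined with $\ve[n,\p] \le \ve[n{-}1,\p]$ this forces $\ve[n,\p] = \ve[n{-}1,\p]$, and since the minimal T-count realizing this optimal distance is already at most $n{-}1$ (witnessed by $D[n{-}1,\p]$), adjoining the T-count-$n$ unitaries cannot lower that minimal T-count, so the minimal-T-count optimizers are unchanged and $D[n,\p] = D[n{-}1,\p]$. If instead $\ve[n,\p,\d] < \ve[n{-}1,\p]$, the observation identifies $\ve[n,\p,\d]$ as the best distance attainable at T-count $n$, so $\ve[n,\p] = \min(\ve[n{-}1,\p], \ve[n,\p,\d]) = \ve[n,\p,\d]$; moreover every unitary realizing this strictly smaller distance must have T-count exactly $n$ (no lower T-count reaches below $\ve[n{-}1,\p]$) and must be minimal and within distance $\d$ by the observation, so the minimal-T-count optimizers are precisely the set returned by RCUP, i.e.\ $D[n,\p] = D[n,\p,\d]$.

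The step I expect to be the main obstacle is the careful handling of the set $D[n,\p]$ rather than the scalar $\ve[n,\p]$: I must verify that the minimality-of-T-count requirement built into the definition of $D[n,\p]$ interacts correctly with the reduction. In the first case this means checking that no new T-count-$n$ optimizer at distance $\ve[n{-}1,\p]$ can displace the existing optimizers, which holds because its T-count is $n$ while $D[n{-}1,\p]$ already exhibits the optimal distance at T-count at most $n{-}1$; in the second case it means checking that the RCUP set omits no optimizer, which holds because every optimizer is forced to be minimal. The $y$-independence of the distance recorded in (\ref{eq:dist}) is the single fact that makes each of these verifications go through.
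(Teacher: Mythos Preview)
Your proposal is correct and follows essentially the same approach as the paper: both arguments split on whether T-count-$n$ unitaries can beat $\ve[n{-}1,\p]$, and both rely on the fact that optimizers in $D[n,\p]$ are necessarily minimal unitaries. Your version is simply more explicit---you derive the minimality of optimizers from the $y$-independence of the distance (\ref{eq:dist}), whereas the paper's proof asserts this in a single clause (``By the definition $D[n,\p]$ contains only minimal unitaries'') and leaves the set equalities largely to the reader.
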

\begin{proof}
There are two alternatives for a given pair $n$ and $\p$: using unitaries with T-count $n$ in addition to unitaries with T-count $n{-}1$ either allows one to achieve better approximation quality or it does not.  In the first case, the only possibility is if $\ve[n,\p,\d] < \ve[n{-}1,\p]$; in the second case, if $\ve[n,\p,\d] \ge \ve[n{-}1,\p]$.  By the definition $D[n,\p]$ contains only minimal unitaries, and the condition $\ve[n,\p] < \ve[n{-}1,\p]$ implies that all unitaries in $D[n,\p]$ must have T-count equal to $n$. Therefore, we conclude that $D[n,\p,\d]=D[n,\p]$ when $\ve[n,\p,\d] < \ve[n{-}1,\p]$. It is also easy to see that if using unitaries with T-count $n$ does not improve the approximation quality, then $D[n,\p]=D[n{-}1,\p]$. 
\end{proof}
In practice, we believe it to be helpful to have a list of answers to the set of problems $\mathrm{CUP}[n,\p]$ for $n$ between $0$ and $N$, where $N$ is bounded by the amount of classical computing resources available, but has to be large enough to allow scalable quantum computing.  Indeed, such a list allows the compiler (or a circuit designer) to quickly select the best needed approximation for a given unitary.  The list is furthermore not very long owing to the logarithmic scaling of the optimal $T$-count as a function of the approximation error, allowing efficient storage and access to it.  Lemma~\ref{lem:cup} shows that the task of computing such a list is equivalent to solving a set of $\mathrm{RCUP}[n,\p,\d_n]$, with a proper choice of $\d_n$. 
%END RCUP

%BEGIN ALG
\subsection{Algorithm}
In this section we present an algorithm for solving $\mathrm{RCUP}[n,\p,\d]$ and prove its correctness. In the previous section we showed that the distance between $\Rzp$ and the exact unitary $U[x,y,k]$ is a function of $x$ and $k$, see equation (\ref{eq:dist}). Our algorithm searches for approximations of $x$ instead of directly searching for $U[x,y,k]$. This motivates the following definition of T-count, applied to the elements of $\Zf$: 
\[
 \T_k(x):= \min \left\{ \T(U[x,y,k]) | U[x,y,k] \text{ -- exact unitary} \right\} 
\]
If the above minimum is to be taken over an empty set, we define $\T_k(x):=\infty$. In other words, this means that there is no unitary over the ring $\Zf$ such that $x$ is its entry. We discuss the conditions on $x$ that guarantee the existence of $U[x,y,k]$ later in Section \ref{sec:norm-eq}. Function $\T_k(x)$  is useful for both the algorithm description and the proof of its correctness. It has several properties:
\begin{prop}\label{prop:t-count}
The T-count of an element $x \in \Zf$ has the following properties:
\begin{itemize}
 \item T-count of any minimal unitary $U[x,y,k]$ is $\T_k(x)$,
 \item $\T_k(x)=\T_{k\md2}(x)$,
 \item $\T_k(x)=\T_k(\w x)$,
 \item if $4\le\T_k(x) < \infty$  \\ then $\T_k(x){=}\sde(|x|^2){-}2{+}(\sde(|x|^2){+}k)\mathrm{mod\,}2$.
\end{itemize}
\end{prop}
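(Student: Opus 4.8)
The first three properties are structural, and I would dispatch them quickly. Property~1 is immediate from the definitions: a minimal $U[x,y,k]$ has, by definition, T-count equal to $\min_{y'}\T(U[x,y',k])$, which is exactly $\T_k(x)$. For Property~2, I would observe that $U[x,y,k{+}2]=U[x,y,k]\,P$, since multiplying the second column by $\w^2=i$ is the same as right-multiplication by the Clifford phase gate $P=T^2$; because right-multiplication by a Clifford preserves the T-count and leaves the unitarity constraint $|x|^2+|y|^2=1$ (hence the admissible set of $y$) unchanged, minimizing over $y$ gives $\T_k(x)=\T_{k+2}(x)=\T_{k\md2}(x)$. For Property~3 I would verify the global-phase identity $\w\,U[x,y,k]=U[\w x,\w y,k{+}2]$ by direct computation; since $\w$ has modulus one it is a global phase and does not affect the T-count, and $y\mapsto\w y$ is a norm-preserving bijection of the admissible sets, so $\T_{k+2}(\w x)=\T_k(x)$; combining with Property~2 yields $\T_k(\w x)=\T_{k+2}(\w x)=\T_k(x)$.

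Property~4 is the substantial one. The plan is to first use Property~1 to replace $\T_k(x)$ by the T-count of a minimal unitary $U[x,y,k]$, and then to invoke the T-optimal exact-synthesis result of~\cite{arXiv:1206.5236}, which computes the minimal T-count of an exact unitary from the denominator structure of its matrix. A preliminary reduction is to express the matrix smallest-denominator-exponent $s:=\sde(U[x,y,k])$ as $\max(\sde(x),\sde(y))$, using that both complex conjugation and multiplication by the unit $\w^k$ leave $\sde$ invariant.

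The next step is a number-theoretic bridge between $\sde$ of an element and $\sde$ of its squared modulus. I would prove that for $z\in\Zf$ with $\sde(z)=s\ge1$ one has $\sde(|z|^2)=2s-\mu_z$, where $\mu_z\in\{0,1\}$ records whether the cleared numerator $\sqrt2^{\,s}z\in\Zw$ is divisible by the ramified prime $\mathfrak p$ above $2$ (with $\mathfrak p^2=(\sqrt2)$); this uses that $\mathfrak p$ is fixed by conjugation, so the $\mathfrak p$-valuation of $z\bar z$ is even. Applying this to the unitarity relation $|x|^2+|y|^2=1$, and using that $\sde(1)=0<\sde(|x|^2)$ forces $\sde(|y|^2)=\sde(|x|^2)$, the parity of $2\sde(x)-\mu_x=2\sde(y)-\mu_y$ pins down $\sde(x)=\sde(y)$ and $\mu_x=\mu_y$, so that $s=\sde(x)$ and $\sde(|x|^2)=2\sde(x)-\mu_x$. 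Substituting into the synthesis count should then reproduce $\sde(|x|^2)-2+(\sde(|x|^2)+k)\md2$, with the final parity term tracking $k$ together with $\mu_x$ through the phase $\w^k$ in the second column.

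The hard part will be this last substitution, i.e.\ pinning down the minimal T-count exactly rather than merely up to an additive constant. The difficulty is that the T-count is genuinely finer than the matrix exponent $s$: the example~(\ref{ex:exampl}) exhibits two unitaries with the same $x$, hence the same $s$, but T-counts $10$ and $12$, so minimizing over $y$ is not vacuous. I therefore expect the real work to be (i) exhibiting an admissible $y$ that attains the minimum---which is intertwined with the solvability of the relevant norm equation treated later in Section~\ref{sec:norm-eq}---and (ii) carrying out the exact parity bookkeeping that converts the synthesis count into the stated closed form, correctly attributing the $(\sde(|x|^2)+k)\md2$ correction to the interaction of $\mu_x$ with the $\w^k$ phase.
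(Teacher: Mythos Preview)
Your treatment of Properties~1--3 matches the paper's proof essentially verbatim: definitions for Property~1, the identity $U[x,y,k]P=U[x,y,k{+}2]$ for Property~2, and $\w\,U[x,y,k]=U[\w x,\w y,k{+}2]$ for Property~3.

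For Property~4, however, your route diverges from the paper's and leaves exactly the two pieces you flag as ``hard'' unresolved, whereas the paper dispatches both with short structural observations rather than any number-theoretic analysis of $\sde(z)$ versus $\sde(|z|^2)$ or a ramification parameter~$\mu_z$. The paper works directly with the matrix invariant $\sde(U[x,y,k]):=\sde(|x|^2)$ and applies Lemma~\ref{thm:tcount}, which already pins $\T(U)$ to one of $\{\sde{-}2,\sde{-}1,\sde\}$. The value $\sde$ is then excluded for a \emph{minimal} $U[x,y,k]$ by the conjugation identity $T\,U[x,y,k]\,T^{\dagger}=U[x,y\w,k]$: if $U$ had T-count $\sde$, the conjugate would have T-count $\sde{-}2$ with the same $x$, contradicting minimality. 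Finally, the choice between $\sde{-}1$ and $\sde{-}2$ is fixed by a determinant parity argument: $\det U[x,y,k]=\w^k$, and since $T$ is the only Clifford$+T$ generator whose determinant is an odd power of $\w$, the T-count must have the same parity as $k$. Combining the parity of $k$ with that of $\sde(|x|^2)$ yields the closed form.

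Two consequences for your plan. First, there is no need to ``exhibit an admissible $y$ that attains the minimum'': minimality is used only negatively, via the $T$-conjugation trick, so the link to the norm-equation material in Section~\ref{sec:norm-eq} is not required here. Second, the ``exact parity bookkeeping'' you anticipate is not a calculation in $\sde$ of entries at all---it is the single-line determinant observation above. Your entry-wise $\sde$/$\mu_z$ machinery is therefore unnecessary for this proposition (and would need care in any case, since the denominator exponent used in the paper is defined relative to the basis $1,\sqrt2,i,i\sqrt2$, not relative to $\Zw$; compare the three cases $\sde(|x|^2)\in\{2m,2m{-}2,2m{-}3\}$ in the proof of Proposition~\ref{pr:denom}, which do not fit a binary $\mu_z\in\{0,1\}$).
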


The function $\sde(x)$ (the smallest denominator exponent) is defined to take value $m$ when $x$ is written in the form 
$(a+\sqrt{2}b)/\sqrt{2}^m$, where $a, b$ and $m$ are integers.  

To solve $\mathrm{RCUP}[n,\p,\d]$ we go through the elements $x$ of $\Zf$ such that $\T_k(x)=n$ in an efficient way.  In particular, we split the search problem into two smaller sub-problems---the search for real and imaginary parts of $x$, and take into account the necessary conditions that the real and imaginary parts of the solution must satisfy.  The latter helps to shrink the size of the search space, resulting in a better efficiency.  In more detail, the properties of $\T_k(x)$ imply that we only need to consider $k$ equal to zero or one, and limit the set of possible $x$ using the relation between $\T_k(x)$ and $\sde$.  The next proposition summarizes the constraints on $x$ that must be satisfied. 

\begin{prop}\label{pr:denom}
Let $x \in \Zf$ be such that $\T_k(x)\ge4$.  Then, $x$ can be written as $(a_0 + \sqrt{2}b_0 + i(a_1+\sqrt{2}b_1)) / \sqrt{2^m}$ for integers $a_j,b_j$ and $m$.  The following conditions hold: 
\begin{itemize}
 \item $m \le \lfloor(\T_k(x)+5)/2\rfloor$,
 \item $a_0^2+2b_0^2 \le 2^m \text{ and } a_1^2+2b_1^2 \le 2^m$.
\end{itemize}
\end{prop}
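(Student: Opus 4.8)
The plan is to handle the two bullet points separately, in each case first eliminating $\T_k(x)$ in favour of $s:=\sde(|x|^2)$. Since $\T_k(x)\ge 4$, the last item of Proposition~\ref{prop:t-count} applies and gives $\T_k(x)=s-2+(s+k)\md 2$, so in particular $s\le\T_k(x)+2$. I let $\sigma$ denote the automorphism of $\Q(\w)$ fixing $i$ and sending $\st\mapsto-\st$ (concretely $\w\mapsto\w^5$), and write $z':=\sigma(z)$; the structural fact I use repeatedly is that $\sigma$ commutes with complex conjugation, so that $\sigma(|z|^2)=\sigma(z)\overline{\sigma(z)}=|z'|^2\ge 0$ for every $z\in\Q(\w)$.

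For the two modulus inequalities I would argue as follows. Unitarity of $U[x,y,k]$ makes its first column a unit vector, whence $|x|^2+|y|^2=1$ and $|x|\le1$. Applying $\sigma$ and using $\sigma(|y|^2)=|y'|^2\ge0$ turns this into $|x'|^2+|y'|^2=1$, so $|x'|\le1$ as well. Writing the numerator of $x$ as $\alpha+i\beta$ with $\alpha=a_0+\st b_0$ and $\beta=a_1+\st b_1$, one has $|x|^2=(\alpha^2+\beta^2)/2^m$ and $|x'|^2=(\alpha'^2+\beta'^2)/2^m$ with $\alpha'=a_0-\st b_0$ and $\beta'=a_1-\st b_1$; the two bounds give $\alpha^2\le2^m$ and $\alpha'^2\le2^m$, and the elementary identity $a_0^2+2b_0^2=\tfrac12(\alpha^2+\alpha'^2)$ then yields $a_0^2+2b_0^2\le2^m$. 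The bound on $a_1,b_1$ is identical. This part is routine once the $\sigma$-conjugate of the unitarity relation is in hand.

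For the bound on $m$ the crux is a lemma comparing the denominator exponent of $x$ with $s$, which I would prove by working in the larger ring $\Zw$. Write $x=u/\st^{g}$ with $u\in\Zw$ and $g=\gde(x)$ the least such exponent. Because $2$ is totally ramified in $\Zw$ with $(\st)$ the square of the maximal ideal $\mathfrak p$, minimality of $g$ forces the $\mathfrak p$-adic valuation of $u$ to lie in $\{0,1\}$. Complex conjugation fixes $\mathfrak p$, so $u\bar u\in\Zs$ has $\mathfrak p$-adic valuation $2v_{\mathfrak p}(u)\in\{0,2\}$; converting to the valuation on $\Zs$ halves this (the ramification index of $\Zw/\Zs$ at this prime is $2$), so $s=\sde(u\bar u/\st^{2g})=2g-v_{\mathfrak p}(u)\in\{2g-1,2g\}$, whence $g\le\lceil s/2\rceil$. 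Since $\st\,\Zw\subseteq\Zs+i\Zs$, the least denominator exponent $m$ of $x$ over $\mathbb{Z}[i,\st]$ satisfies $m\le g+1$; the representation for that $m$ is the one for which the modulus bounds above were verified. Chaining $m\le g+1\le\lceil s/2\rceil+1\le\lceil(\T_k(x)+2)/2\rceil+1=\lfloor(\T_k(x)+5)/2\rfloor$ closes the argument.

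The main obstacle is the ramification bookkeeping in $\Zw$: one must verify that $(\st)=\mathfrak p^2$ with $\mathfrak p$ the unique prime over $2$, that $\gde$-minimality therefore pins $v_{\mathfrak p}(u)$ inside $\{0,1\}$ rather than merely controlling it modulo something, that complex conjugation preserves $\mathfrak p$ so $v_{\mathfrak p}(\bar u)=v_{\mathfrak p}(u)$, and that the comparison of valuations across $\Zs\subseteq\Zw$ introduces exactly the factor $2$. If one tried to reason directly in $\mathbb{Z}[i,\st]$ instead, the naive cancellation of $\st$ from $\alpha^2+\beta^2$ is not bounded, so passing to $\Zw$ is what makes the valuation land in $\{0,1\}$ and the estimate come out. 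Once these points are pinned down the floor/ceiling arithmetic lines up with no slack, consistent with the $R_z(\pi/16)$ example in (\ref{ex:exampl}), whose T-count-$10$ solution attains $m=\lfloor(10+5)/2\rfloor=7$.
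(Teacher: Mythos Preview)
Your proof is correct, but it follows a genuinely different route from the paper on both bullets.

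For the bound on $m$, the paper works directly in $\mathbb{Z}[i,\sqrt{2}]$ with $m$ minimal there, expands $|x|^2$ explicitly, and runs a parity case analysis on $(a_0,a_1,b_0,b_1)$ to pin down $\sde(|x|^2)\in\{2m,\,2m-2,\,2m-3\}$; the worst case $2m-3\le s\le\T_k(x)+2$ gives the bound. You instead pass to the cyclotomic ring $\Zw$, use that $(\sqrt{2})=\mathfrak p^2$ with $\mathfrak p$ the unique prime over $2$, conclude $v_{\mathfrak p}(u)\in\{0,1\}$ from minimality of $g=\gde(x)$, and obtain $s\in\{2g-1,2g\}$; a separate step $m\le g+1$ (via $\sqrt{2}\,\Zw\subseteq\mathbb{Z}[i,\sqrt{2}]$) then yields the same floor expression. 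Your argument is more structural and avoids casework; the paper's is fully elementary and needs no ramification facts. Both are tight against the example in~(\ref{ex:exampl}).

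For the modulus inequalities, the paper writes $|y|^2=(y_0+\sqrt{2}y_1)/2^m$, reads off $x_0+y_0=2^m$ from $|x|^2+|y|^2=1$, and uses $y_0\ge0$ (automatic since $y_0$ is a sum of integer squares) to get $a_0^2+a_1^2+2b_0^2+2b_1^2\le 2^m$, which dominates each summand $a_j^2+2b_j^2$. You instead invoke the Galois conjugate to get $|x'|\le1$ and average $\alpha^2+\alpha'^2=2(a_0^2+2b_0^2)$. Both are valid; the paper's version is slightly shorter because non-negativity of $y_0$ is immediate without appealing to $\sigma$, and it actually yields the stronger sum bound. Your version has the virtue that the same tool ($\sigma$) organises both halves of the argument.
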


Note that we separated conditions on integers $a_0,b_0$ and $a_1,b_1$ defining the real and imaginary parts of $x$. The first constraint follows from the inequality $\sqrt{1-\l|\re(xe^{i\p/2}\w^{-k/2})\r|} \le \d$ (see (\ref{eq:dist})) and leads to additional constraints on $a_j,b_j$ that are also separate for the real and imaginary parts of $x$: 

\begin{prop}\label{pr:delta}
Let $x = (a_0 + \sqrt{2}b_0 + i(a_1+\sqrt{2}b_1)) / \sqrt{2^m}$ for integers $a_j,b_j$ and a non-negative integer $m$, and $|x|^2\le1$. Let $\sqrt{1-\left|\re(xe^{-i\theta})\right|} \le \d \le 1$.  Then, the following conditions hold:  
\begin{itemize}
 \item $|(a_0+\sqrt{2}b_0)-\cos(\theta)\sqrt{2^m}|\le \d\sqrt{2^{m+1}}$,
 \item $|(a_1+\sqrt{2}b_1)-\sin(\theta)\sqrt{2^m}|\le \d\sqrt{2^{m+1}}$.
\end{itemize}
\end{prop}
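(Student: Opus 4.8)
The plan is to translate both claimed inequalities into a single statement about the Euclidean distance between the point $(\re x,\im x)$ and the point $(\cos\theta,\sin\theta)$ on the unit circle. Writing $u:=\re(x)=(a_0+\sqrt2 b_0)/\sqrt{2^m}$ and $v:=\im(x)=(a_1+\sqrt2 b_1)/\sqrt{2^m}$, and dividing each target inequality through by $\sqrt{2^m}$, the two conclusions are equivalent to $|u-\cos\theta|\le\delta\sqrt2$ and $|v-\sin\theta|\le\delta\sqrt2$. So it suffices to bound each coordinate difference, and the natural route is to bound the whole vector difference at once.

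First I would record the two facts supplied by the hypotheses in these coordinates. The assumption $|x|^2\le1$ reads $u^2+v^2\le1$, and expanding $xe^{-i\theta}$ gives $\re(xe^{-i\theta})=u\cos\theta+v\sin\theta$, so the distance hypothesis says $|u\cos\theta+v\sin\theta|\ge1-\delta^2$ (here $\delta\le1$ ensures $1-\delta^2\ge0$). The core identity is the expansion
\[
(u-\cos\theta)^2+(v-\sin\theta)^2=u^2+v^2+1-2(u\cos\theta+v\sin\theta).
\]
Bounding the right-hand side using $u^2+v^2\le1$ and $u\cos\theta+v\sin\theta\ge1-\delta^2$ gives $1+1-2(1-\delta^2)=2\delta^2$. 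Since each squared coordinate difference is at most this sum, we obtain $|u-\cos\theta|\le\delta\sqrt2$ and $|v-\sin\theta|\le\delta\sqrt2$; multiplying back by $\sqrt{2^m}$ and using $\sqrt2\cdot\sqrt{2^m}=\sqrt{2^{m+1}}$ yields exactly the two stated bounds.

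The one step that needs care---and the main obstacle---is the sign hidden inside the absolute value: the estimate above used $u\cos\theta+v\sin\theta\ge1-\delta^2$, whereas the hypothesis only provides $|u\cos\theta+v\sin\theta|\ge1-\delta^2$. If $\re(xe^{-i\theta})$ were negative the point $(u,v)$ would sit near $(-\cos\theta,-\sin\theta)$ and the bounds would instead hold for $u+\cos\theta$ and $v+\sin\theta$. I would dispose of this by reducing, without loss of generality, to the case $\re(xe^{-i\theta})\ge0$: replacing $x$ by $-x=\w^4 x$ flips the sign of $\re(xe^{-i\theta})$, leaves $|x|^2$ and the global-phase-invariant distance unchanged, and preserves the T-count since $\T_k(x)=\T_k(\w x)$ by Proposition~\ref{prop:t-count}. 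Working throughout with the representative whose real part is nonnegative, the displayed estimate then applies verbatim and completes the proof.
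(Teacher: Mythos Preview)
Your core argument---expand $|x-e^{i\theta}|^2=|x|^2+1-2\re(xe^{-i\theta})$, bound it by $2-2\re(xe^{-i\theta})$ via $|x|^2\le1$, use the hypothesis to get $|x-e^{i\theta}|^2\le2\d^2$, and then project to real and imaginary parts---is exactly the paper's proof. So the first two paragraphs match the paper line for line.

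Your third paragraph goes beyond the paper: you correctly spot that passing from $|\re(xe^{-i\theta})|\ge1-\d^2$ to $\re(xe^{-i\theta})\ge1-\d^2$ needs the sign $\re(xe^{-i\theta})\ge0$. The paper simply writes ``We obtain the bound $2-2\re(xe^{-i\theta})\le2\d^2$ noting that $\sqrt{1-|\re(xe^{-i\theta})|}\le\d$'' and moves on, so the gap you flag is genuinely present there too. In fact the proposition is false as literally stated: with $m=0$, $a_0=-1$, $b_0=a_1=b_1=0$, $\theta=0$ one has $|x|=1$ and $\sqrt{1-|\re(xe^{-i\theta})|}=0$, yet $|a_0-\cos\theta|=2>\d\sqrt2$ for every $\d\le1$.

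Where your fix goes wrong is in the scope of the ``without loss of generality.'' Replacing $x$ by $-x$ negates the integers $a_j,b_j$ as well, so the inequalities you then prove are $|{-a_0}-\sqrt2 b_0-\cos\theta\sqrt{2^m}|\le\d\sqrt{2^{m+1}}$, which are \emph{not} the stated conclusions for the original $a_j,b_j$. Invoking $\T_k(x)=\T_k(\w x)$ is beside the point here: the proposition is a purely analytic statement about a fixed $x$ and makes no reference to T-count, so you cannot swap representatives inside its proof. Your reduction is perfectly appropriate one level up, in the correctness argument for the algorithm (where one is free to search over $\w^g x$), but it does not salvage the proposition itself. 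For the proposition, either add the hypothesis $\re(xe^{-i\theta})\ge0$, or record that the conclusion holds for one of $\pm x$; the paper's own proof tacitly does the former.
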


In the first part of the algorithm (Fig.~\ref{fig:rcu-algorithm}) we build arrays of $a_j,b_j$ satisfying conditions from Propositions~\ref{pr:denom} and \ref{pr:delta} by calling the FIND-HALVES~(Fig.~\ref{alg:find-halves}) procedure. In addition, in FIND-HALVES we compute contributions $\ve_{re}$ and $\ve_{im}$ from the real and imaginary parts of $x$ to $\sqrt{1-\l|\re(xe^{i\p/2}\w^{-k/2})\r|}$. More details on this are provided by the following proposition:

\begin{prop}\label{pr:split}
Let $\d\le1/2$ and let $x = (a_0 + \sqrt{2}b_0 + i(a_1+\sqrt{2}b_1)) / \sqrt{2^m}$ for integers $a_j,b_j$ and non-negative integer $m$. If the following holds:
\begin{itemize}
 \item $|(a_0+\sqrt{2}b_0)-\cos(\theta)\sqrt{2^m}|\le \d\sqrt{2^{m+1}}$,
 \item $|(a_1+\sqrt{2}b_1)-\sin(\theta)\sqrt{2^m}|\le \d\sqrt{2^{m+1}}$,
\end{itemize}
then
\[
\begin{array}{l}
(\sqrt{1-\left|\re(xe^{-i\theta})\right|})^2 \sqrt{2^{m}}=\ve_{re}+\ve_{im}, \\
\quad\ve_{re}=\cos(\theta)(\sqrt{2^m}\cos(\theta)-(a_0+b_0\st)),\\%
\quad\ve_{im}=\sin(\theta)(\sqrt{2^m}\sin(\theta)-(a_1+b_1\st)).
\end{array}
\]
\end{prop}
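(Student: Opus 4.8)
The plan is to reduce the claimed identity to a single sign condition and then use the two hypotheses together with $\d\le1/2$ to verify that sign. First I would introduce the shorthand $u:=a_0+\st b_0$ and $v:=a_1+\st b_1$, so that $x=(u+iv)/\sqrt{2^m}$. Expanding $xe^{-i\th}=(u+iv)(\cos\th-i\sin\th)/\sqrt{2^m}$ and taking the real part gives $\re(xe^{-i\th})=(u\cos\th+v\sin\th)/\sqrt{2^m}$, which is the only computation needed to connect $x$ to the trigonometric quantities appearing in the statement.

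With this in hand, both sides of the target identity unfold by routine algebra. Cancelling the outer square against the square root, the left-hand side simplifies to $(1-\abs{\re(xe^{-i\th})})\sqrt{2^m}=\sqrt{2^m}-\abs{u\cos\th+v\sin\th}$. Expanding $\ve_{re}+\ve_{im}$ and collecting the $\sqrt{2^m}$ terms via $\cos^2\th+\sin^2\th=1$ gives $\sqrt{2^m}-(u\cos\th+v\sin\th)$. The two expressions therefore coincide exactly when $\abs{u\cos\th+v\sin\th}=u\cos\th+v\sin\th$, i.e. when $u\cos\th+v\sin\th\ge0$. The entire content of the proposition lies in this nonnegativity; everything else is bookkeeping.

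The hard part, then, is establishing $u\cos\th+v\sin\th\ge0$, and this is precisely where the hypotheses enter. I would write $u=\cos\th\sqrt{2^m}+\eta_0$ and $v=\sin\th\sqrt{2^m}+\eta_1$, so that the two assumed inequalities read $\abs{\eta_0},\abs{\eta_1}\le\d\sqrt{2^{m+1}}=\d\st\sqrt{2^m}$. Substituting and once more using $\cos^2\th+\sin^2\th=1$ yields $u\cos\th+v\sin\th=\sqrt{2^m}+(\eta_0\cos\th+\eta_1\sin\th)$. I would then control the perturbation term by the triangle inequality together with the elementary estimate $\abs{\cos\th}+\abs{\sin\th}\le\st$, obtaining $\abs{\eta_0\cos\th+\eta_1\sin\th}\le\d\st\sqrt{2^m}\cdot\st=2\d\sqrt{2^m}$. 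Since $\d\le1/2$, this perturbation is at most $\sqrt{2^m}$, forcing $u\cos\th+v\sin\th\ge\sqrt{2^m}-\sqrt{2^m}=0$, which removes the absolute value and finishes the argument. The one point worth verifying carefully is that the factor-$\st$ gap between the bound $\d\sqrt{2^{m+1}}$ in the hypotheses and the threshold $\sqrt{2^m}$ is exactly absorbed by $\abs{\cos\th}+\abs{\sin\th}\le\st$ combined with $\d\le1/2$; this is the precise reason the constant $1/2$ is required.
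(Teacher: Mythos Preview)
Your proof is correct and follows essentially the same two-step structure as the paper: first establish $\re(xe^{-i\th})\ge0$ to drop the absolute value, then expand using $\cos^2\th+\sin^2\th=1$. The only cosmetic difference is in the sign argument: the paper bounds $|x-e^{i\th}|\le2\d$ and then reads off $\re(xe^{-i\th})\ge1-2\d$ from the identity $|x-e^{i\th}|^2=1+|x|^2-2\re(xe^{-i\th})$, whereas you reach the same inequality $u\cos\th+v\sin\th\ge\sqrt{2^m}(1-2\d)$ directly via $|\cos\th|+|\sin\th|\le\st$; both routes yield the identical bound and the same use of $\d\le1/2$.
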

In the next steps of the algorithm we enumerate $x$ from $\Zf$ satisfying the necessary conditions. We start with those that give the best approximations, in other words the smallest value of $\sqrt{1-\l|\re(xe^{i\p/2}\w^{-k/2})\r|}$. For each candidate $x$ we compute $\T_k(x)$ using procedure MIN-T-COUNT. When an $x$ with the required T-count is found, procedure ALL-UNITARIES is called to find all minimal exact unitaries of the form $U[x,y,k]$, and the algorithm terminates.  Details on MIN-T-COUNT and ALL-UNITARIES are provided in Section~\ref{sec:norm-eq}. 

It is important to note that step \ref{line:merge} of the algorithm (Fig.~\ref{fig:rcu-algorithm}) is performed efficiently. We first choose tuples corresponding to the real parts such that $\ve_{re}$ belongs to the interval $I=[\alpha_0,\alpha_1]$ and then, for each $\ve_{re}$, choose tuples with $\ve_{im}$ in the interval $[\alpha_0{-}\ve_{re},\alpha_1{-}\ve_{re}]$. The proofs of the propositions presented above are tedious and we postpone them to Section~\ref{sec:td}. We next rely on these propositions to prove the correctness of the algorithm. 

\begin{figure}[t]
\begin{algorithmic}[1]
%\Procedure{RCU}{$T,\p,\d$}
\Require $n,\p,\d$ \Comment{$n$ -- T-count, $\Rzp$ -- target rotation}
\State $m \gets \lf (n+1)/2 \rf+2$
\For{$k=0,1$}
  \State $L_{re,k} \gets \text{FIND-HALVES}(\cos(\p-\pi k/8),m,\d)$
  \State $L_{im,k} \gets \text{FIND-HALVES}(\sin(\p-\pi k/8),m,\d)$
  \Statex[4] \Comment (described on Fig. \ref{alg:find-halves})
\EndFor 
\State Interval $I \gets [0,\alpha]$ \Comment Pick $\alpha > 0$ based on $L_{re,k},L_{im,k}$
\While{$I \cap [0,\d] \ne \varnothing$}
\State {Find an array $A$ of tuples $(\ve,a_0,b_0,a_1,b_1,k)$ s.t.: \label{line:merge}
\Statex[2] $ \bullet\, (\ve_{re},a_0,b_0)$ from $L_{re,k}$ 
\Statex[2] $ \bullet\, (\ve_{im},a_1,b_1)$ from $L_{im,k}$}
\Statex[2] $ \bullet\, \ve=\ve_{re}+\ve_{im}$ and $\ve \in I \cap [0,\d]$
\State Sort $A$ by $\ve$ in ascending order\
\State {$\ve_1 < \ldots < \ve_M \gets$ all distinct $\ve$ that occur in $A$}
\For{$ j = 1 \text{ to } M $}
  \State $\sol \gets \varnothing$
  \ForAll{$(\ve_j,a_0,b_0,a_1,b_1) \in A$}
    \State $x' \gets a_0+b_0\sqrt{2}+i(a_1+b_1\sqrt{2})$
    \State $n_0 \gets \text{MIN-T-COUNT}(x',m,k)$
    \Statex[4] \Comment (computes $\T_k(x'/\sqrt{2^m})$, see Sec. \ref{sec:norm-eq})
    \If{$n=n_0$}
      \State $\sol \gets \sol \cup \text{ALL-UNITARIES}(x',m,k)$
      \Statex[5] \Comment (enumerates minimal unitaries
      \Statex[5] \Comment $U[x'/\sqrt{2^m},y,k]$, see Sec. \ref{sec:norm-eq})
    \EndIf
  \EndFor
  \If{$\sol \ne \varnothing$}
  \State \Return $(\ve_j,\sol)$ \Comment Solution
  \EndIf
\EndFor
\State Replace $I=[\alpha_0,\alpha_1]$ by $I=[\alpha_1,2\alpha_1-\alpha_0]$ 
\EndWhile
\State \Return $(\d,\varnothing)$ \Comment No solutions
\Ensure $(\ve_n^R,\sol_{n,\p}^\d )$

%\EndProcedure
\end{algorithmic}
\caption{\label{fig:rcu-algorithm}RCU-Algorithm: the algorithm for  $\mathrm{RCUP}[n,\p,\d]$.}
\end{figure}

\begin{thm}\label{thm:rcu-algorithm}
The RCU-Algorithm (Fig.~\ref{fig:rcu-algorithm}) solves $\mathrm{RCUP}[n,\p,\d]$---the Restricted Closest Unitaries Problem with T-count $n$, angle $\p$ and threshold $\d$ when $n\ge4$ and $\d \le 1/2$. 
\end{thm}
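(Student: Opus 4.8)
The goal is to show the algorithm halts and that its output $(\ve_n^R,\sol_{n,\p}^\d)$ equals $(\ve[n,\p,\d],D[n,\p,\d])$. I would split the argument into three parts: completeness of the enumerated search space, correctness of the order in which candidates are inspected (so that the first hit is optimal), and correctness of the termination and empty-output branch. Throughout I would lean on the fact that the hypotheses $n\ge4$ and $\d\le 1/2$ are exactly what is needed to apply Propositions~\ref{prop:t-count}, \ref{pr:denom} and \ref{pr:split}.

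For completeness I would fix an arbitrary minimal exact unitary $U[x,y,k]$ with $\T(U)=n$ and $d(\Rzp,U[x,y,k])\le\d$ and show it is produced by the merge step. Three reductions do this. First, writing $k=2q+r$ with $r\in\{0,1\}$, a direct computation gives $\w^{-q}U[x,y,k]=U[\w^{-q}x,\w^{-q}y,r]$, so up to a global phase — under which both $d$ and $\T$ are invariant — the target coincides with a unitary whose third index is $r\in\{0,1\}$; by the second and third bullets of Proposition~\ref{prop:t-count} the map $x\mapsto\w^{-q}x$ preserves the T-count, so searching only $k\in\{0,1\}$ loses nothing. Second, since $U$ is minimal its T-count equals $\T_k(x)=n\ge4$ (first bullet of Proposition~\ref{prop:t-count}), whence Proposition~\ref{pr:denom} bounds the smallest denominator exponent of $x$ by $\lfloor(n+5)/2\rfloor=\lfloor(n+1)/2\rfloor+2=m$, the value set on line~1; I would then check that rewriting $x$ with this common $m$ keeps the $a_j,b_j$ integral and preserves the box bound $a_j^2+2b_j^2\le2^m$ (a short case split on the parity of the exponent shift). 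Third, letting $\th=\th(\p,k)$ be the angle for which $d(\Rzp,U[x,y,k])=\sqrt{1-|\re(xe^{-i\th})|}$ (equation (\ref{eq:dist})), the bound $d\le\d$ together with $|x|^2\le1$ gives, by Proposition~\ref{pr:delta}, the separated inequalities $|(a_0+\st b_0)-\cos\th\sqrt{2^m}|\le\d\sqrt{2^{m+1}}$ and its imaginary analogue. Hence $(a_0,b_0)$ and $(a_1,b_1)$ satisfy precisely the necessary conditions that FIND-HALVES enforces, so $(\ve_{re},a_0,b_0)\in L_{re,k}$ and $(\ve_{im},a_1,b_1)\in L_{im,k}$ and the merge on line~\ref{line:merge} forms this $x$.

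Next I would establish optimality of the first hit. By Proposition~\ref{pr:split}, for every enumerated $x$ the quantity $\ve=\ve_{re}+\ve_{im}$ is a fixed strictly increasing function of $d(\Rzp,U[x,y,k])$ (since $m$ is fixed), so sorting by $\ve$ and thresholding by $\d$ are equivalent to ordering and thresholding in the distance itself; moreover the successive intervals $[0,\alpha],[\alpha,2\alpha],\dots$ produced by the replacement rule tile $[0,\infty)$, so candidates are visited in global distance order. The T-count test is exact: MIN-T-COUNT returns $\T_k(x)$ (Section~\ref{sec:norm-eq}), which by the first bullet of Proposition~\ref{prop:t-count} is the T-count of any minimal $U[x,y,k]$, so $n=n_0$ selects exactly the unitaries of T-count $n$. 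Therefore the first distinct value $\ve_j$ at which $\sol\ne\varnothing$ realizes the minimal achievable distance $\ve[n,\p,\d]$; and because the array $A$ active at that moment contains \emph{all} tuples with that value of $\ve$, calling ALL-UNITARIES (which returns every minimal $U[x,y,k]$ for the given $x$, Section~\ref{sec:norm-eq}) on each qualifying $x$ accumulates exactly $D[n,\p,\d]$.

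Finally, termination and the empty case are routine: each $L_{re,k},L_{im,k}$ is finite by the box bound of Proposition~\ref{pr:denom}, so every iteration does finite work, while the window's left endpoint strictly increases, forcing $I\cap[0,\d]=\varnothing$ after finitely many steps; the loop then returns $(\d,\varnothing)$, which matches the RCUP convention precisely when no minimal unitary of T-count $n$ lies within distance $\d$. The step I expect to be the main obstacle is completeness — simultaneously justifying the restriction to $k\in\{0,1\}$ and the single fixed denominator exponent $m$, and verifying that both the box bound and the $\d$-box survive the change of representation. Once these are secured, the ordering argument (Proposition~\ref{pr:split}) and the termination argument reduce to bookkeeping.
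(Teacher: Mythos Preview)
Your proposal is correct and follows essentially the same route as the paper: reduce $k$ to $\{0,1\}$ via the global-phase identity and Proposition~\ref{prop:t-count}, invoke Proposition~\ref{pr:denom} for the denominator bound and box constraint, Proposition~\ref{pr:delta} for the separated $\d$-boxes so that FIND-HALVES captures every candidate, and Proposition~\ref{pr:split} to argue that sorting by $\ve=\ve_{re}+\ve_{im}$ coincides with sorting by distance. Your organization into completeness/ordering/termination is a clean repackaging of the paper's more linear argument, and you are in fact slightly more careful than the paper in flagging the parity case split needed when rewriting $x$ with the algorithm's fixed exponent $m$ rather than the minimal one (the paper asserts this step without comment).
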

\begin{proof}
We first formally describe the output of the algorithm $(\vea,\sol)$ and then prove that it is indeed the solution to $\mathrm{RCUP}[n,\p,\d]$. Let us define $\th_k:=\frac{\pi k}{8}-\frac{\p}{2}$ for $k=0,1$ and the following sets: 
\[
 S_k:=\sets{x}{\T_k(x)=n,\,\sqrt{1-\l|\re(xe^{-i\theta_k})\r|} \le \d}.
\]
We first consider the case when at least one of the $S_k$ is non-empty and show that the algorithm outputs a pair $(\vea,\sol)$ such that 
\[
\begin{array}{l}
 \vea = \min(\vea_1,\vea_2),\text{ where } \\
 \vea_k = \min\set{\sqrt{1-\l|\re\l( x e^{-i\theta_k}\r)\r|}}{ x  \in S_k }.
\end{array}
\]
Let us also denote by $\sol_k$ the elements of $S_k$ within distance $\vea$ from $\Rzp$. It is not difficult to see that at least one of the $\sol_k$ is non-empty. 

By the definition of $S_k$, the value $\vea$ is in the interval $[0,\d]$.  Therefore, at some iteration in the \emph{while} loop $\vea$ will belong to the interval $I$ and will be in the list $\ve_1,\ldots,\ve_M$. Indeed, suppose that $\sol_{\ks}$ is non-empty and $x$ is its element. In other words $x$ is such that $\T_{\ks}(x)=n$ and $\sqrt{1-\l|\re(xe^{-i\theta_{\ks}})\r|} = \vea$. Proposition~\ref{pr:denom} implies that $x$ can be represented as 
\[
  (a_0+b_0\st+a_1i  +b_1 i\st)/\sqrt{2^m}\text{, for } m = \lf (n+1)/2 \rf+2.
\]
Propositions \ref{pr:denom} and \ref{pr:delta} imply that integers $a_j,b_j$ must satisfy the following inequalities
\[
 \begin{array}{l}
  a_0^2+2b_0^2 \le 2^m, a_1^2+2b_1^2 \le 2^m, \\
  |(a_0+\sqrt{2}b_0)-\cos(\theta_{\ks})\sqrt{2^m}|\le\sqrt{2^{m+1}}\d, \\
  |(a_1+\sqrt{2}b_1)-\sin(\theta_{\ks})\sqrt{2^m}|\le\sqrt{2^{m+1}}\d.
 \end{array}
\]
This implies that after executions of procedure FIND-HALVES, for
\[
\begin{array}{r}
 \ve_{re}=\sqrt{2^{-m}}\cos(\theta)(\sqrt{2^m}\cos(\theta)-(a_0+b_0\st)) \\
 \ve_{im}=\sqrt{2^{-m}}\sin(\theta)(\sqrt{2^m}\sin(\theta)-(a_1+b_1\st))
\end{array},
\]
the triples $(\ve_{re},a_0,b_0)$ and $(\ve_{im},a_1,b_1)$ belong to $L_{re,\ks}$ and $L_{im,\ks}$ (Fig.~\ref{fig:rcu-algorithm}), correspondingly.  From Proposition~\ref{pr:split} we recall that $\vea=\ve_{re}+\ve_{im}$, and therefore the tuple $(\vea,a_0,b_0,a_1,b_1)$ belongs to the array $A$ and $\vea$ is in the list $\ve_1,\ldots,\ve_M$.  Let $m_0$ denote the position of $\vea$ in the list. When the \emph{for} loop reaches $m=m_0$ the algorithm will terminate. It is not difficult to see that the algorithm does not terminate before this step, as it would contradict the minimality of $\vea$. The only way for the the algorithm to terminate earlier is if there is an $x$ such that $\sqrt{1-\l|\re\l(x_k e^{-i\theta_k}\r)\r|}<\vea$ and $\T_k(x)=n$. 

The procedure ALL-UNITARIES is designed to output the set
\[
 \sol = \bigcup_{\substack{k=0,1 \\ g=0..7}} \set{ \w^g U[x,y,k] \in \Um }{ x \in \sol_k, y \in \Zf },
\]
where by $\Um$ we denote the set of all exact minimal unitaries. 

Let us now show that $(\vea,\sol)$ is the solution to $\mathrm{RCUP}[n,\p,\d]$. 
Suppose that the set $D[n,\p,\d]$ is non-empty and let $U[x,y,k]$ be some element of the set such that the distance to it from $\Rzp$ is minimal. The distance can be expressed as: 
\[
 d(\Rzp,U[x,y,k])=\sqrt{1-\l|\re(xe^{i\p/2}\w^{-k/2})\r|} < \d.
\]
Proposition~\ref{prop:t-count} implies that $\T_k(x)=n$ because $U[x,y,k]$ is a minimal unitary with T-count $n$. We next show that we can make $k$ equal to zero or one. Indeed, let us write $k=k_0+2s$ where $k_0$ is either zero or one, then 
\[
\l|\re(xe^{i\p/2}\w^{-k/2})\r|= \l|\re(x\w^s e^{i\p/2}\w^{-k_0/2})\r|.
\]
Again using Proposition~\ref{prop:t-count} we see that $\T_k(x\w^s)=n$. In addition, we have 
\[
 d(\Rzp,U[x,y,k])=\sqrt{1-\l|\re((x\w^s)e^{-i\th_{k_0}})\r|}.
\]
This implies that $x\w^s$ is in $S_{k_0}$ and $\vea$ is less than or equal to $\ve[n,\p,\d]$. We next show that $\vea\ge\ve[n,\p,\d]$. Suppose that $\sol_{k_0}$ is non-empty for some $k_0$ and let $x$ be its element. Then, we have $\T_{k_0}(x)=n$ and there exists a $y \in \Zf$ such that unitary $U[x,y,k_0]$ is minimal with T-count $n$. We also notice that $d(\Rzp,U[x,y,k_0])=\vea$, which concludes the proof of the equality $\vea=\ve[n,\p,\d]$. 

Next we show that $\sol$ coincides with the set $D[n,\p,\d]$. Consider some element $\w^g U[x,y,k]$ of $\sol$. The set $D[n,\p,\d]$ contains any unitary $U$ together with all unitaries $\w^g U$, therefore it is enough to show that $U[x,y,k]$ is in $D[n,\p,\d]$. The fact that $U[x,y,k]$ is a minimal unitary and $\T_k(x)=n$ implies that $U[x,y,k]$ has T-count $n$. On the other hand, by definition of $\sol_k$ we have that $d(\Rzp,U[x,y,k])=\vea$ which shows that $\sol$ is a subset of $D[n,\p,\d]$. Let us now show that $D[n,\p,\d]$ is a subset of $\sol$. Let $U[x,y,k]$ be an element of $D[n,\p,\d]$. We first note that $U[x,y,k]$ can be equivalently written as $\w^s U[x\w^{-s},y\w^{-s},k{-}2s]$. We chose $s$ in such a way that $k_0=k{-}2s$ is either zero or one. We note that $U[x\w^{-s},y\w^{-s},k_0]$ is a minimal unitary and therefore $\T_{k_0}(x\w^{-s})=n$. Distance $d$ is global phase invariant, which implies that $d(\Rzp,U[x\w^{-s},y\w^{-s},k_0])=\vea$ and $\sqrt{1-\l|\re\l( x\w^{-s} e^{-i\theta_k}\r)\r|} = \vea$. 
We 
conclude that $x\w^{-s}$ is in $\sol_{k_0}$ and $U[x\w^{-s},y\w^{-s},k-2s]$ 
is in $\sol$. It is not difficult to see from the definition of $\sol$ that if $U$ is in $\sol$ then for any integer $g$ unitary $\w^g U$ is also in $\sol$. This concludes the proof of the equality $D[n,\p,\d]=\sol$. 

To conclude the entire proof, we still need to discuss the special case when the problem has no solutions. Suppose that $D[n,\p,\d]$ is an empty set. It is not difficult to show that this implies that both $S_k$ are empty and vice-versa, using the ideas from the main part of the proof. 
\end{proof}

The restrictions on $n$ and $\d$ in the theorem statement are not significant. It is much easier to solve $\mathrm{CUP}[n,\p]$ directly when $n<4$. From our numeric experiments we found that $D[3,\p]$ is always less than $0.1376$ therefore each time $\mathrm{RCUP}[n,\p,\d]$ is used, it is used with parameter $\d<1/2$. 

\begin{figure}[t]
\begin{algorithmic}[1]
\Require $ \alpha,\d \in \R, m \in \Z, m \ge 0$
\Procedure{$\text{FIND-HALVES}$}{$\alpha,m,\d$}
  \State $W \gets \alpha\sqrt{2^{-m}}, \ve \gets \d\sqrt{2^m}$
  \State $b \gets \lf-\sqrt{2^{m}}\rf $
  \State $v \gets \alpha\sqrt{2^{m}} - b\st $ \Comment true on every step
  \State $R \gets \varnothing $ 
  \While{$b \le \lceil \sqrt{2^m} \rceil $}  
    \State $a_{\min}= \lceil v - \ve \rceil, a_{\max}=\lfloor v+\ve \rfloor$
    \ForAll{$a \in [a_{\min},a_{\max}] \cap \Z$} \Comment See Sec. [*]
    \If{$a^2+2b^2\le2^m$} \Comment for discussion
      \State $R\gets R \cup \{((x-a)W,a,b)\}$ 
    \EndIf
    \EndFor 
    \State $b \gets b + 1, x \gets x - \sqrt{2}$
  \EndWhile
  \State Sort $R$ by first element in ascending order 
  \State \Return R
\EndProcedure
\end{algorithmic}
\caption{\label{alg:find-halves}FIND-HALVES Procedure. Finds all numbers of the from $a+\sqrt{b}$ that satisfy conditions $|\alpha\sqrt{2^m} - (a+ b\st)|\le\d\sqrt{2^m}$ and $a^2+2b^2 \leq 2^m$.  Returns the list of tuples $(\alpha\sqrt{2^{-m}}(\alpha\sqrt{2^m}  - (a+ b\st)),a,b)$ sorted by the first entry.}
\end{figure}

It is possible to make the FIND-HALVES procedure~(Fig.~\ref{alg:find-halves}) slightly more efficient. We found that the length of the interval $[a_{\min},a_{\max}]$ in it is usually less than $1/2$ therefore the internal \emph{for} loop can be replaced with the function \emph{round}. It is also not difficult to see that the \emph{while} loop of the procedure can be easily parallelized. In our implementation of the algorithm we benefit from both these observations.

%END ALG

%END MAIN RESULTS

%BEGIN TECH

\section{Technical details}\label{sec:td}

In this section we prove Propositions \ref{prop:t-count}-\ref{pr:split}. First we need to recall some useful results and definitions from~\cite{arXiv:1206.5236}. It is possible to extend $\sde$ on exact unitaries as $\sde(U[x,y,k])=\sde(|x|^2)$. The following result relates it to a unitary T-count: 
\begin{lem}[Corollary of the Theorem 2 from \cite{arXiv:1206.5236}]
\label{thm:tcount}Let $U$ be a unitary over $\Zf$ such that $\sde(U)\ge4$, and $j$ and $l$ be integers such that $\sde(HT^{j}UT^{l}H) = \sde(U) + 1$. Then, 
\[
\T(U)=\sde(U) - (j\md2) - (l\md2).
\]
\end{lem}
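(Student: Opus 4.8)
The plan is to deduce the formula from Theorem~2 of \cite{arXiv:1206.5236} by passing to the ``decorated'' unitary $V:=HT^{j}UT^{l}H$, whose defining property (by hypothesis) is $\sde(V)=\sde(U)+1$. First I would record the two elementary facts that drive the bookkeeping. Since $T^{8}=I$, $T^{4}=Z$ and $T^{2}=P$ are Clifford and $H$ is Clifford, the prefactor $HT^{j}$ and the postfactor $T^{l}H$ are Clifford$+$T circuits using exactly $j\md2$ and $l\md2$ $T$-gates; hence, by subadditivity of $\T$ together with the inverse relation $U=T^{-j}HVHT^{-l}$, one gets the crude two-sided bound $\l|\T(U)-\T(V)\r|\le(j\md2)+(l\md2)$. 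Second, multiplying a unitary over $\Zf$ by $T^{a}$ on either side leaves the top-left entry $x$ unchanged, so it does not change $\sde$; thus $\sde$ is moved only by the two $H$ gates inside $V$, which is exactly why the hypothesis isolates the net effect $+1$ and why only the parities of $j,l$ can matter for the $T$-count.

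The substance of the argument is to upgrade the crude bound to the exact equality $\T(U)=\T(V)-(j\md2)-(l\md2)$. Here I would invoke Theorem~2, which supplies the $T$-count of a unitary over $\Zf$ with $\sde\ge4$ \emph{together with} a structural description of its $T$-optimal circuits. Applying it to $V$ (legitimate since $\sde(V)=\sde(U)+1\ge5$) evaluates $\T(V)$ in terms of $\sde(V)$; I expect this to come out to $\T(V)=\sde(V)-1=\sde(U)$ in the $\sde$-raising configuration. The remaining point is that the decoration contributes its full $T$-cost, i.e.\ there is no cancellation: because $\sde(V)=\sde(U)+1>\sde(U)$, the gates $HT^{j}$ and $T^{l}H$ push the smallest denominator exponent \emph{up}, so in a $T$-optimal circuit for $V$ the boundary $T$-gates cannot be annihilated against the interior when one strips the decoration off again to recover $U$; this forces the subadditive inequality to be tight in the relevant direction. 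Combining $\T(V)=\sde(U)$ with $\T(U)=\T(V)-(j\md2)-(l\md2)$ then yields $\T(U)=\sde(U)-(j\md2)-(l\md2)$, as claimed.

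The main obstacle is precisely this non-cancellation (tightness) step: proving that decorating in the $\sde$-increasing direction adds exactly $(j\md2)+(l\md2)$ $T$-gates rather than fewer. This is where the full strength of Theorem~2 is needed---not merely the numerical value of the $T$-count, but its characterization of which boundary $T$-gates a $T$-optimal circuit can save---since subadditivity alone only pins $\T(U)$ to an interval of width $2\l((j\md2)+(l\md2)\r)$. A secondary, purely clerical, difficulty is the parity bookkeeping: reducing $j,l$ modulo $2$ via $T^{2},T^{4}$ being Clifford, confirming that the constant emerging from Theorem~2 in the $\sde$-raising configuration is indeed $\sde(V)-1$, and checking $\sde(V)\ge4$ so that Theorem~2 applies throughout. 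Once the tightness step is in hand, the rest is immediate.
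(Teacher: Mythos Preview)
The paper does not supply its own proof of this lemma: it is stated as a corollary of Theorem~2 of \cite{arXiv:1206.5236} and then immediately used as a black box in the proofs of Propositions~\ref{prop:t-count}--\ref{pr:split}. There is therefore no argument in the present paper against which to compare your proposal; the authors simply import the statement.

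As for your sketch on its own merits: the bookkeeping observations are fine (reducing $j,l$ modulo $2$ via $T^{2}=P$, the invariance of $\sde$ under one-sided multiplication by $T^{a}$, and the crude subadditivity bound $|\T(U)-\T(V)|\le (j\md2)+(l\md2)$). The substantive step, however, is not really an argument yet. You ``expect'' that Theorem~2 applied to $V$ gives $\T(V)=\sde(V)-1$, but you have not stated what Theorem~2 actually asserts, so this is a guess rather than a deduction; and your tightness step (``boundary $T$-gates cannot be annihilated because $\sde$ went up'') is heuristic. In the source paper \cite{arXiv:1206.5236}, Theorem~2 gives the $T$-count directly in terms of $\sde$ together with exactly the left/right boundary data encoded by the parities of $j$ and $l$, so the lemma follows by a one-line specialization rather than by the detour through $V$ and a separate non-cancellation argument. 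If you want a self-contained proof, you should look up the precise statement of Theorem~2 there; once you have it, the corollary is immediate and the tightness issue dissolves.
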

Next follow the proofs of all four propositions. 

\begin{proof}[Proof of Proposition~\ref{prop:t-count}]
The first property follows directly from the definition of $\T_k(x)$ and minimal unitaries.  To prove the second one we observe that multiplication by the Phase gate $P:=diag\{1,i\}$ does not change the T-count of a unitary and $U[x,y,k]P=U[x,y,k+2]$; the definition of $\T_k(x)$ implies $\T_k(x)=\T_{k+2}(x)$.  To prove the third one we rely on the equality $\w U[x,y,k] = U[\w x,\w y,k+2]$. 

To prove the fourth property let us consider the minimal unitary $U[x,y,k]$.  Its T-count is at least four and therefore it requires at least three Hadamard gates to be implemented.  As a result, $\sde(U[x,y,k])$ is greater than four. Lemma~\ref{thm:tcount} applies to $U[x,y,k]$ and implies that there are three possible values of $\T_k(x)$---being $\sde(U[x,y,k])-2$, $\sde(U[x,y,k])-1$, and $\sde(U[x,y,k])$.  A minimal unitary cannot have T-count equal to $\sde(U[x,y,k])$.  Indeed, if this were true the unitary $TU[x,y,k]T^{\dagger}$, that is equal to $U[x,y\w,k]$, would have had T-count of $\sde(U[x,y,k])-2$, which contradicts the minimality of $U[x,y,k]$.  We next show that the T-count of $U[x,y,k]$ is completely defined by the parity of $\sde$ and $k$.  The determinant of $U[x,y,k]$ is equal to $\w^k$.  In addition, the T-count of $U[x,y,k]$ must have the same parity as $k$, because the T gate is the only gate in a Clifford+T library whose determinant is equal to an odd power of $\w$. To illustrate, if $\sde$ were odd and k were even, the T-count could only be equal to $\sde(U[x,y,k])-1$. Via going through all possible parity combinations, we get to the required expression for $\T_k(x)$.
\end{proof}

\begin{proof}[Proof of Proposition \ref{pr:denom}]
Recall that any $x \in \Zf$ can be written as $(a_0 + \sqrt{2}b_0 + i(a_1+\sqrt{2}b_1)) / \sqrt{2^m}$, where $a_j$, $b_j$ and $m$ are integers.  Let us choose such a representation where $m$ is minimal.  Integer $m$ must be positive, otherwise $x$ either cannot be an entry of a unitary or its T-count must be zero.  Note that at least one of the $a_j$ in the expression must be odd, otherwise $m$ is not minimal. It is useful to expand $|x|^2$ as
\[
\frac{(a_0^2 + a_1^2) + 2(b_0^2 + b_1^2) + 2\sqrt{2}(a_0 b_0 + a_1 b_1) }{2^m}.
\]
If one of the $a_j$ is odd and the other one is even we obtain the equality $\sde(|x|^2) = 2m$. Let us not consider the case when both $a_j$ are odd.  In the case that $b_0$ and $b_1$ have different parity we get:
\[
\begin{array}{r}
 (a_0^2 + a_1^2) + 2(b_0^2 + b_1^2) = 0 \md 4, \text{ and } \\
 a_0 b_0 + a_1 b_1 = 1 \md 2,
 \end{array}
\]
and conclude that $\sde(|x|^2)=2m-3$. In the other case, when $b_0$ and $b_1$ have the same parity, we get  $(a_0^2 + a_1^2) + 2(b_0^2 + b_1^2) = 2 \md 4$ and $\sde(|x|^2)=2m-2$. In the worst case $2m-3\le \T_k(x) +2$, which gives us a bound on~$m$. 

To prove the second part of the proposition we consider a minimal unitary $U[x,y,k]$ and note that $|x|^2$ and $|y|^2$ can be expressed as 
$(x_0 + \sqrt{2} x_1)/2^m$ and $(y_0 + \sqrt{2} y_1)/2^m$, correspondingly.
The equality $|x|^2+|y|^2=1$ implies that $x_0+y_0 = 2^m$. Using the non-negativity of $y_0$ we get
\[
 x_0 = (a_0^2 + a_1^2) + 2(b_0^2 + b_1^2) \le 2^m,
\]
which leads to the desired bounds on $a_j$ and $b_j$.
\end{proof}

\begin{proof}[Proof of Proposition \ref{pr:delta}]
Firstly, we show that $|x-e^{i\theta}|\le\st\d$. To accomplish this, we expand $|x-e^{i\theta}|^2$ and note that the inequality $|x|^2\le1$ implies that $ |x-e^{i\theta}|^2 \le 2- 2\re(xe^{-i\theta})$.  We obtain the bound $2-2\re(xe^{-i\theta})\le2\d^2$ noting that $\sqrt{1-\left|\re(xe^{-i\theta})\right|} \le \d \le 1$.
 
Secondly, for any complex number $z$, the absolute values of its real and imaginary parts are both less than $|z|$.  Using the inequality $|x-e^{i\theta}|\le\st\d$, derived in the previous paragraph, we conclude that: 
\[
 \begin{array}{c}
  |(a_0+\sqrt{2}b_0)-\cos(\theta)\sqrt{2^m}|\le\sqrt{2^{m+1}}\d, \\
  |(a_1+\sqrt{2}b_1)-\sin(\theta)\sqrt{2^m}|\le\sqrt{2^{m+1}}\d.
 \end{array}
\]
\end{proof}

\begin{proof}[Proof of Proposition \ref{pr:split}]
First we show that $\re(xe^{-i\theta})>0$. Inequalities for real and imaginary parts of $\sqrt{2^m}(x-e^{i\theta})$ imply $|x-e^{i\theta}|\le2\d$ and $|x|\ge1-2\d$. Using that $ |x-e^{i\theta}|^2 = 1+|x|^2 - 2\re(xe^{-i\theta})$ we conclude that $\re(xe^{-i\theta})\ge1-2\d$ which is always non-negative when $\d\le1/2$. 

Second, we use $\re(xe^{-i\theta})>0$ to write
\[
\begin{array}{l}
 (\sqrt{1-\left|\re(xe^{-i\theta})\right|})^2 \sqrt{2^{m}}= \\
 \sqrt{2^{m}}(1-\cos(\theta)\re(x)-\sin(\theta)\im(x)).
\end{array}
\]
By replacing $1$ with $\cos(\theta)^2+\sin(\theta)^2$ we find 
\[
\begin{array}{l}
 (\sqrt{1-\left|\re(xe^{-i\theta})\right|})^2 \sqrt{2^{m}}= \\
 \cos(\theta)(\cos(\theta)\sqrt{2^{m}}-\sqrt{2^{m}}\re(x))+ \\
 \sin(\theta)(\sin(\theta)\sqrt{2^{m}}-\sqrt{2^{m}}\im(x)),
\end{array}
\]
which leads to the required result.
\end{proof}

%END TECH

\section{Norm equations}\label{sec:norm-eq}

\begin{figure}[t]
\begin{algorithmic}[1]
\Require $x'=a_0+b_0\sqrt{2}+i(a_1+b_1\sqrt{2}), m,k \in \Z, m \ge 0$
\Procedure{$\text{MIN-T-COUNT}$}{$x',m,k$}
  \State $a+b\st \gets 2^m - |x'|^2,s\gets\sde(|x'|^2/2^m)$
  \If{$s\le4$}
    \State \Return $\infty$
  \EndIf
  \If{$\text{IS-SOLVABLE}(a{+}\sqrt{2}b)$}
    \State \Return $s-2+(k+s)\mathrm{mod}\,2$
  \Else
    \State \Return $\infty$
  \EndIf
\EndProcedure
\Ensure $n$
\end{algorithmic}
\caption[$\text{MIN-T-COUNT}$ Procedure]{\label{alg:min-t-count}$\text{MIN-T-COUNT}$ Procedure. Outputs $\T_k(x'/\sqrt{2^m})$ if it is greater or equal to $4$ and $\infty$ otherwise.}
\end{figure}

In this section we discuss mathematical tools required to compute $\T_k(x)$~(procedure MIN-T-COUNT) and to enumerate all minimal exact unitaries with top-left entry $x$ -- unitaries of the form $U[x,y,k]$~(procedure ALL-UNITARIES). Alternatively, these two problems can be reformulated using equation 
\begin{equation}\label{eq:uni}
 |y|^2 = 1-|x|^2,
\end{equation}
that expresses the condition of $U[x,y,k]$ being a unitary matrix.  Proposition~\ref{prop:t-count} implies that it is easy to find $\T_k(x)$  when the equation above has a solution for some $y$ from $\Zf$. The problem of enumerating all unitaries is directly related to enumerating all the solutions of the equation (\ref{eq:uni}). 

We reduce the equation (\ref{eq:uni}) to the relative norm equation between two rings of algebraic integers: 
\[
 \Zw := \set{a_0+a_1\w+a_2\w^2+a_3\w^3}{a_i \in \Z }, \w := e^{i\pi/4},
\]
and its real subring $\Zs := \set{ a + b\sqrt{2}}{a,b\in\Z}$. Indeed, $x$ and $y$ can be expressed as $x'/\sqrt{2}^m$ and $y'/\sqrt{2}^m$ where $x'$ and $y'$ are from $\Zw$. Equation (\ref{eq:uni}) can then be rewritten as
\begin{equation}\label{eq:rn}
 |y'|^2 = 2^m - |x'|^2 = A+B\sqrt{2} \in \Zs,
\end{equation}
which is a special case of the relative norm equation well studied in the literature\cite{Coh2,Coh3}. The conditions required for this equation to be solvable, as well as the methods for enumerating its solutions (when exist) are well-known.  The algorithmic solution is furthermore available in the software package PARI/GP~\cite{PARI}. Next we give a brief simplified overview of the results related to solving the equation~(\ref{eq:rn}).

% Split into subproblems 
Symmetries in equations frequently provide useful insights about their solvability.  As was observed in \cite{S}, the automorphism $Aut$ of both $\Zw$ and its subring $\Zs$ defined as:
\begin{equation*}
Aut(\w):= -\w, Aut(\sqrt{2}):= -\sqrt{2}, Aut(a):= a,  \text{ for } a\in\Z
\end{equation*}
is useful in studying the equation (\ref{eq:rn}). It can be easily verified that the mapping $Aut(\cdot)$ preserves the addition, multiplication and commutes with complex conjugation and taking the norm squared ($|\cdot|^2$). Therefore, if $y'$ is a valid solution to the equation (\ref{eq:rn}), then $Aut(y')$ must also be a valid solution to the equation $Aut(A+B\sqrt{2})=A-B\sqrt{2}$. This implies the following necessary conditions for the equation (\ref{eq:rn}) to be solvable:
\begin{equation} \label{eq:nec-cond}
A+ B\sqrt{2} \ge 0, A- B\sqrt{2} \ge 0. 
\end{equation}
However, as we will show below, this condition is not sufficient. 

The problem of solving the relative norm equation can be reduced to a set of subproblems. Suppose that $A+B\sqrt{2}$ can be written as a product $(A_1+B_1\sqrt{2})(A_2+B_2\sqrt{2})$ and $\{y_j\}$ is the set of solutions to the relative norm equation (\ref{eq:rn}) with the right-hand side of $A_j+B_j\sqrt{2}$. In such case, $\w^k y_1 y_2$, $\w^k y_1^{\ast} y_2$, $\w^k y_1 y_2^{\ast}$, and $\w^k y_1^{\ast} y_2^{\ast}$ are also solutions of the equation (\ref{eq:rn}), for any integer $k$.  More generally, any element $A+B\sqrt{2}$ of $\Zs$ can be written as the product: 
\[
 u \sqrt{2}^{k(0)} p_1^{k(1)} \ldots p_N^{k(N)} q_1^{l(1)} \ldots q_M^{l(M)}, 
\]
also known as prime factorization, where $u$ is a unit in the ring $\Zs$, $p_{1..N}$ and $q_{1..M}$ are primes in the ring $\Zs$, and $k(\cdot), l(\cdot) \in \mathbb{N} \cup {0}$.  It turns out that every such $u$ possesses the property $u\cdot Aut(u)=1$ and $p_j$ are distinguished from $q_j$ as follows: $p_j\cdot Aut(p_j)$ are integer prime numbers of the form $8n\pm1$, whereas $q_j\cdot Aut(q_j)$ are integer prime numbers of the form $8n\pm3$.  The above factorization raises to the general theory of quadratic extensions and classification of primes into ramified, split and inert (see Chapter 3.4 of \cite{Coh1}).  When the necessary condition (\ref{eq:nec-cond}) holds, the right hand side can thus be rewritten as 
\begin{equation} \label{eq:decomposition}
  (\sqrt{2}-1)^{l(0)} \sqrt{2}^{k(0)} p_1^{k(1)} \ldots p_N^{k(N)} q_1^{l(1)} \ldots q_M^{l(M)}, 
\end{equation}
keeping in mind that $p_j^{k(j)}$, $Aut(p_j^{k(j)})$, $q_j^{l(j)}$, and $Aut(q_j^{l(j)})$ should all be positive.  $\sqrt{2}-1$ is the fundamental unit of $\Zs$ (for results on unit groups of quadratic extensions see Chapter 3.4.2 of \cite{Coh1}). The condition (\ref{eq:nec-cond}) implies that the sum $k(0)+l(0)$ must be an even number.  The relative norm equation (\ref{eq:rn}) with the right-hand side of $A+B\sqrt{2}$ is solvable if and only if each of the equations 
$|y'|^2 = p_j^{k(j)}$ and $|y'|^2 = q_j^{l(j)}$ is solvable.  Next we discuss the solvability of such equations with primes in $\Zr$ in more details. 

When $k(j)$ and $l(j)$ are even integers, the equations obviously have a solution.  Every odd number is a sum of an even number and a 1.  The equation $|y'|^2=p_j$ has a solution if and only if $p\cdot Aut(p)$ is a prime of the form $8n{+}1$. A probabilistic polynomial time algorithm for finding a solution can be found in \cite{S}.  In this case, according to the theory of cyclotomic number fields, it is said that the prime integer  $p\cdot Aut(p)$ splits completely~\cite{W}.  Solvability of the equation $|y'|^2=q_j$ is related to two subrings of $\Zw$: $\Z[i\sqrt{2}]$ and $\Z[i]$. When $q_j$ has the form $8n{-}3$, it splits in $\Z[i]$. In other words, there is a solution to the equation of the form $y'{=}a\pm bi$, where $a$ and $b$ are integers.  When $q_j$ has the form $8n{+}3$, it splits in $\Z[i\sqrt{2}]$---in other words, there is a solution to the equation of the form $y'{=}a\pm ib\sqrt{2}$, where $a$ and $b$ are integers.  In both cases, the equations can be solved in probabilistic polynomial time, see Chapter 4.8 \cite{Coh3} for details. 

% Discuss a factoring algorithm
When the ring factorization (\ref{eq:decomposition}) is known, the solution to the equation (\ref{eq:rn}) can be found efficiently.  The problem of factorization over the ring $\Zr$ can be reduced to that of factoring integers.  We note that the mapping/norm $\N: p \mapsto p\cdot Aut(p)$ is multiplicative and applying it to (\ref{eq:decomposition}) gives us integer
\[
 (-1)^{l(0)+k(0)} 2^{k(0)} \N(p_1)^{k(1)}\ldots \N(p_N)^{k(N)} q_1^{2 l(1)} \ldots q_M^{2 l(M)}.
\]
Therefore, via employing integer factoring we can find all $\N(p_j)$, $k(j)$, $q_j$, and $l(j)$.  Recall that $\N(p_j)$ are prime numbers of the form $8n\pm1$ and $q_j$ are prime numbers of the form $8n\pm3$. Numbers $p_j$ can be found in probabilistic polynomial time using the algorithm from Chapter~4.8~\cite{Coh3}, that is available as a part of PARI/GP package~\cite{PARI}. Therefore, the implementation of the predicate IS-SOLVABLE in the MIN-T-COUNT procedure~(Fig.~\ref{alg:min-t-count}) first checks the necessary conditions~(\ref{eq:nec-cond}).  If they hold, it computes the norm $\N$ of the right-hand side of the equation, and finds $\N(p_j)$, $k(j)$, $q_j$, and $l(j)$. The procedure returns TRUE if there is no $\N(p_j)$ of the form $8n-1$ such that $k(j)$ is odd, and FALSE otherwise. 

% Discuss how to enumerate solutions
We demonstrate how to enumerate all solutions to~(\ref{eq:rn}) with the following example.  Consider the equation: 
\[
 |y'|^2 = 1\,828\,037\,034 - 1\,292\,617\,383 \sqrt{2}.
\]
The norm $\N$ of the right-hand side has prime integer factorization $2\cdot193\cdot2297\cdot3^2$.  We next find that the equation can be rewritten as  
\[
 |y'|^2 = (\sqrt{2}-1)^{15}\cdot\sqrt{2}\cdot(15-4\sqrt{2})\cdot(53-16\sqrt{2})\cdot3.
\]
The general form of the solution is
\[
 y' = (\sqrt{2}-1)^7 \w^k y_0{\cdot}Y_1{\cdot}Y_2{\cdot}Y_3,\, Y_j \in \{y_j,y_j^{\ast}\},
\]
where $y_0=1-\w$, $y_1=-1-3\w+\w^2-2\w^3$, $y_2=3-6\w-2\w^2+2\w^3$, and $y_3=1\pm i\sqrt{2}$.  We do not consider complex conjugation of $y_0$ because $(y_0)^{\ast}=\w^3 y_0$.  This is related to the fact that 2 is the only ramified prime in $\Zw$~\cite{W}. Taking into account that $k$ can take values from zero to seven we find that there are $64$ different solutions to the equation in our example.  

To implement ALL-UNITARIES procedure we need to factorize the right-hand side of the relative norm equation~(\ref{eq:rn}), find and record all possible solutions, write down all the unitaries, and pick those that are minimal. 

%BEGIN EXPERIMENTS
\section{Experimental results}\label{sec:exp}
In this section we discuss performance of our C++ implementation of the above algorithms.  We report memory and processing time required by our implementation, as well as the precision in approximation we were able achieve.  In our experiments we used a high performance server with eight Quad-Core AMD Opteron 8356 (2.30 GHz) processors and 128 GB of RAM memory.  Our implementation completely utilizes the processing power of the server and runs 32 threads in parallel.  The binary and the source code are available at \url{https://code.google.com/p/sqct/}.

%\subsection{The algorithm performance}
To obtain the estimates for the time and memory required to run our algorithm to approximate some target unitary we found T-optimal approximations of $R_z$ rotations by angles of the form $2\pi k/1000$ for $k=1..1000$.  We used circuits with up to $109$ T gates for the approximation. The only known algorithm that gives the same optimality guarantee as our algorithm is the naive brute force search, \cite{F1}. Let $N_{\BFS}$ and $t_{\BFS}$ be the number of records and user time needed for the brute force search.  Since the number of unitaries with T-count at most $n$ scales as $192\cdot(3\cdot2^n-2)$~\cite{MA}, both $\log_2(N_{\BFS})$ and $\log_2(t_{\BFS})$ equal to $n$ up to an additive constant.  The most memory consuming part of our algorithm is the FIND-HALVES procedure.  The base two logarithm of the number of records it produces scales as $0.17n+5.07$ on average and as $0.25n+3.15$ in the worst case, see Fig.~\ref{fig:memory}. Fig.~\ref{fig:time} illustrates the fact that FIND-HALVES is the most time consuming part of the algorithm and, on average, logarithm base two of the time in milliseconds required to execute this step scales as $0.21n-10.41$.  Even though our algorithm requires an exponential amount of time and memory, the constants in the exponent are between four and five times better than those in the naive brute force search.  This allows us to find T-optimal approximations with precisions up to $10^{-15}$ using modern computers.  We believe such a precision to be sufficient for most applications, as we discussed in the introduction. 

\begin{figure}
 \begin{tikzpicture}%[trim axis left, trim axis right]
\pgfplotsset{
every axis plot post/.append style={
every mark/.append style={scale=0.4}}}
legend style={
}
\begin{axis}[grid=both, xmin = 0, ymin =0, ymax = 32,ytick ={0,4,...,32},xtick ={0,20,...,120},minor tick num=1,
		xlabel=T-count,ylabel= $\log_2(N)$ , legend style={cells={anchor=west},anchor=north east,at={(0.5,0.96)} },legend columns=1,y post scale=0.9, x post scale=1 ]
\selectcolormodel{gray}

\addplot[color=black,only marks,mark=o] coordinates
{(8,7.0871657852203980242)
(9,7.0038270781114757483)
(10,7.7023805016155248507)
(11,7.3198346689772366811)
(12,8.1613485859633833657)
(13,8.0773979008509592358)
(14,8.7529158080514890527)
(15,8.4180634197968631779)
(16,9.2456857988351793664)
(17,8.7260818889022409749)
(18,9.5149921134661867175)
(19,9.1099168217260348381)
(20,9.8467173876648308959)
(21,9.0408896916272882853)
(22,9.8867452247421088342)
(23,9.2354877394584826161)
(24,9.8041600684420555047)
(25,9.3352161488001850797)
(26,10.3185495882069601706)
(27,9.7354781755098805804)
(28,10.7231890692839060889)
(29,9.9963873340340215648)
(30,10.8353398817123110711)
(31,10.256708767584378941)
(32,11.0968225291646715343)
(33,10.618670892704513115)
(34,11.525561427142136426)
(35,11.0509090848197641464)
(36,11.7444512540000132571)
(37,11.4963654747778558401)
(38,12.2150527292652303569)
(39,11.78048440432029623)
(40,12.3141249303025039231)
(41,11.9987843273386436938)
(42,12.8131823485066093007)
(43,12.4199738675898951762)
(44,13.2249097072476108108)
(45,12.9523037137833561801)
(46,13.5954906066076408613)
(47,13.13496979555556995)
(48,13.7888451440072604286)
(49,13.3946915140496932602)
(50,14.0496640326076092924)
(51,13.7128432538780982724)
(52,14.4494792251114919817)
(53,14.0989047087344077422)
(54,14.8546116335338817999)
(55,14.5363197736519683132)
(56,15.1042228799255092041)
(57,14.7843815951836510796)
(58,15.4678645382264225053)
(59,15.1115908023104664457)
(60,15.7743141124381573742)
(61,15.5036159507115119969)
(62,16.1211855700907223246)
(63,15.804009253349756812)
(64,16.5221167609917755555)
(65,16.1994590991181358588)
(66,16.8122330077732496274)
(67,16.4514178886696546369)
(68,17.1490469635422925137)
(69,16.8169062607581881323)
(70,17.5155519750980965834)
(71,17.1786928484633345731)
(72,17.844717770885499987)
(73,17.5766247225443401412)
(74,18.2288789987721775956)
(75,17.9293395463913137417)
(76,18.6256996371512414341)
(77,18.3726473873800214215)
(78,19.0217480401262574027)
(79,18.6804750145798167415)
(80,19.3245432166801121625)
(81,19.1125447020050670529)
(82,19.722833852844427638)
(83,19.4143878949672182488)
(84,20.0653930072645623546)
(85,19.7354015443449670489)
(86,20.410569764149450241)
(87,20.1526668599374272711)
(88,20.7591701256147870044)
(89,20.4090212862364564019)
(90,21.0828704131989885344)
(91,20.8333456004200183537)
(92,21.3816389351982492126)
(93,21.0802885255802106382)
(94,21.7290857550657428656)
(95,21.4462280191575217599)
(96,22.0857678474145034042)
(97,21.8392907508630215084)
(98,22.5001588815296495571)
(99,22.2749027224631659012)
(100,22.887902901351082638)
(101,22.6258088383339821404)
(102,23.2031622065194945678)
(103,22.9800013851662936352)
(104,23.6412351617613346312)
(105,23.4479410792043839527)
(106,24.0909545453586762894)
(107,23.8355325921954763091)
(108,24.4304865450706718844)
(109,24.2511110229792181782)
};

\addplot[domain=30:110]{0.175678531948060050042*x+5.06837111023308728516};

\addplot[color=black,only marks,mark=+] coordinates
{(8,7.9188632372745945124)
(9,7.8826430493618412588)
(10,8.3706874068072176976)
(11,8.1085244567781690537)
(12,9.)
(13,9.)
(14,9.971543553950771991)
(15,9.4998458870832053605)
(16,10.2372099607550214697)
(17,9.8105716347411469424)
(18,10.3987436919381932401)
(19,10.1910592145316563319)
(20,10.945443836377911529)
(21,10.6564248632777803261)
(22,11.1823943534045288638)
(23,11.0147179298600095167)
(24,11.982280604558282924)
(25,11.4152135987652863246)
(26,12.4062050054188541311)
(27,11.7168194613299490274)
(28,12.2839564876651142536)
(29,12.2407913321619568603)
(30,13.2363132270428051154)
(31,12.8394006489884125195)
(32,13.841269488914093002)
(33,13.8335819305416933272)
(34,14.3758539764184670259)
(35,13.2088439907346147138)
(36,14.1446582428318823208)
(37,14.1446582428318823208)
(38,15.0071587017813995441)
(39,14.9385688271438764102)
(40,14.933690654952233734)
(41,13.2875681028314043563)
(42,15.4075997158904035152)
(43,15.4075997158904035152)
(44,15.7438614141974452277)
(45,15.7438614141974452277)
(46,16.7436510743030099841)
(47,14.7873924667365086864)
(48,15.287351660697491677)
(49,15.287351660697491677)
(50,15.7873669573688799579)
(51,15.7873669573688799579)
(52,17.6108734516740822093)
(53,16.287351660697491677)
(54,16.7873797121090755928)
(55,16.7873797121090755928)
(56,17.287378717740110018)
(57,17.287378717740110018)
(58,17.7873669573688799579)
(59,17.7873669573688799579)
(60,18.2873742082682458179)
(61,18.2873742082682458179)
(62,18.7873701460645004639)
(63,18.7873701460645004639)
(64,19.2873742082682458179)
(65,19.2873742082682458179)
(66,19.7873733347530732176)
(67,19.7873733347530732176)
(68,20.2873753356375333106)
(69,20.2873753356375333106)
(70,20.7873741319241151984)
(71,20.7873741319241151984)
(72,21.2873730808980773612)
(73,21.2873730808980773612)
(74,21.7873733347530732176)
(75,21.7873733347530732176)
(76,22.2873733627407020658)
(77,22.2873733627407020658)
(78,22.7873733347530732176)
(79,22.7873733347530732176)
(80,23.2873733627407020658)
(81,23.2873733627407020658)
(82,23.7873734343994775541)
(83,23.7873734343994775541)
(84,24.2873732922800510516)
(85,24.2873732922800510516)
(86,24.7873733845762762462)
(87,24.7873733845762762462)
(88,25.2873733275103769888)
(89,25.2873733275103769888)
(90,25.7873733347530732176)
(91,25.7873733347530732176)
(92,26.2873733098952141277)
(93,26.2873733098952141277)
(94,26.7873732849298684685)
(95,26.7873732849298684685)
(96,27.2873732746648877603)
(97,27.2873732746648877603)
(98,27.7873732662461662439)
(99,27.7873732662461662439)
(100,28.287373265857306034)
(101,28.287373265857306034)
(102,28.7873732600182654486)
(103,28.7873732600182654486)
(104,29.287373263655410594)
(105,29.287373263655410594)
(106,29.78737326157524065)
(107,29.78737326157524065)
(108,30.287373263655410594)
(109,30.287373263655410594)
};

\addplot[domain=50:110]{0.249999982592102979115*x+3.15946798932939542048};

\legend{mean$\quad$,$0.17x+5.07$,max$\quad$,$0.25x+3.15$, MIN-T-COUNT}

\end{axis}%
\end{tikzpicture}%
 \caption{\label{fig:memory}Scaling of the number of records found by FIND-HALVES procedure when executing $RCUP$ algorithm for a given T-count. Graph shows the average and maximum number of records per a given T-count. The data is averaged over T-optimal approximations of the $R_z$ rotations by angles $2\pi k/1000$ for $k=1..1000$.}
\end{figure}

\begin{figure}
 \begin{tikzpicture}%[trim axis left, trim axis right]
\pgfplotsset{
every axis plot post/.append style={
every mark/.append style={scale=0.4}}}
legend style={
}
\begin{axis}[grid=both, xmin = 0, ymin =0, ymax = 20,ytick ={0,2,...,20},xtick ={0,20,...,120},minor tick num=1,
		xlabel=T-count,ylabel= $\log_2(t)$ , legend style={cells={anchor=west},anchor=north,at={(0.5,0.96)} },legend columns=2,y post scale=0.9, x post scale=1 ]
\selectcolormodel{gray}

\addplot[color=black,only marks,mark=o] coordinates
{(8,3.9524123159845526295)
(9,1.2698380773947703015)
(10,1.1070202592052907067)
(11,1.0482975756250229132)
(12,1.3007434376851527944)
(13,1.2912393022727890278)
(14,1.4788284341330231854)
(15,1.4259254466122742712)
(16,1.7880642366249124483)
(17,1.6242164928110550471)
(18,2.0227491667997681301)
(19,1.8148354480561598645)
(20,2.2315267212462394381)
(21,1.7789818723742000903)
(22,2.2735048493875851656)
(23,1.8140963183386707116)
(24,2.1539275547290410012)
(25,1.8305664680869624015)
(26,2.4874549637013321503)
(27,1.8954430057944965838)
(28,2.7689299921423423364)
(29,1.8654006178475291287)
(30,2.5480883629387040397)
(31,1.7846999583785997901)
(32,2.4507134858418552035)
(33,1.7403348221783924496)
(34,2.8285058684082982895)
(35,2.1798438190248745235)
(36,2.6323489576237968202)
(37,2.5321037189831619161)
(38,2.7643068724010985432)
(39,2.2233150388896730694)
(40,2.0992679325984937107)
(41,1.7809890769096262962)
(42,2.8317688323649034037)
(43,2.219725506230866418)
(44,3.4176196400502971437)
(45,3.1666968086963053995)
(46,3.2077748081270504903)
(47,1.9113997541061403398)
(48,2.1386093845951734966)
(49,2.0642931018739851153)
(50,2.2693680403110235707)
(51,2.1941296135143879869)
(52,3.6593797925298864347)
(53,2.4158884383506191811)
(54,2.7190125416535812735)
(55,2.5990177962136576563)
(56,2.9287692002947774534)
(57,2.776234983047852295)
(58,3.1943447658091762772)
(59,3.0119886185309545904)
(60,3.4632079755827132282)
(61,3.3358886639303771917)
(62,3.759241817567176937)
(63,3.5666569967464971928)
(64,4.1011330065177471588)
(65,3.9195660520044688815)
(66,4.3761760357760639296)
(67,4.1593769020137235659)
(68,4.7112137739860356844)
(69,4.5296695492873194088)
(70,5.0760053230037170579)
(71,4.8944687101204536704)
(72,5.436756342014898773)
(73,5.3035732408361909773)
(74,5.81573303571125772)
(75,5.6983231198362438426)
(76,6.2705842757757375621)
(77,6.1860191696170906294)
(78,6.7146862967094482611)
(79,6.640988712285539001)
(80,7.1606203146941339238)
(81,7.3139290288976920503)
(82,7.7303849281987406851)
(83,7.907953386137212115)
(84,8.2186740415693682056)
(85,8.2997994204410877293)
(86,8.6028127805453236149)
(87,8.6780048654331911457)
(88,9.0721053689369885067)
(89,9.0295273956824720363)
(90,9.4589316012791361781)
(91,9.4025917928973608723)
(92,9.8544726223489853418)
(93,9.768682676594784907)
(94,10.228719992352125157)
(95,10.1496389450706644623)
(96,10.660893287905587966)
(97,10.5844041326313732194)
(98,11.0954448999562175938)
(99,11.0234791347030673171)
(100,11.5555041402465775688)
(101,11.4754789688924044395)
(102,11.9713301978498781128)
(103,11.9048534743981669565)
(104,12.4658118512463747612)
(105,12.4080047824975342703)
(106,12.9288919552918586276)
(107,12.8553662656950070317)
(108,13.4002342196672052154)
(109,13.3512702872134732819)
};

\addplot[domain=0:110]{0.218698536279075497613*x+-10.4100066768000459237};

\addplot[color=black,only marks,mark=*] coordinates
{(8,2.0293127532721396536)
(9,1.7714578407359261176)
(10,2.5791023276468753911)
(11,1.9953448907179429911)
(12,2.6328078002692567553)
(13,2.4511122525408698745)
(14,3.0783845555107923881)
(15,2.7277206101469511467)
(16,3.4519437597146051366)
(17,2.8591188287202056759)
(18,3.7065425833821482474)
(19,3.1257682446354569149)
(20,3.9359558618944716734)
(21,2.8113831533396721419)
(22,3.7623200118491029585)
(23,2.6961617987105100676)
(24,3.6584721552801239168)
(25,3.0125611376811364837)
(26,4.0562921407827840629)
(27,3.2161185628388011291)
(28,4.271261114758262089)
(29,3.1255471511438781994)
(30,4.0028286534822413828)
(31,3.1224035621881117758)
(32,4.2747931394344984439)
(33,3.5779697624505600441)
(34,4.6766834229284091307)
(35,4.1769026114494232018)
(36,4.9788842277529069809)
(37,4.6073706512941576681)
(38,5.4124054890114516866)
(39,5.1035060317747147857)
(40,5.4915762728277055396)
(41,5.3199894883380169748)
(42,5.5911118365673860165)
(43,5.4636121959869374937)
(44,5.7363666112758410783)
(45,5.6661465425495161499)
(46,5.7520189485040030159)
(47,5.6478910511963266219)
(48,5.7657877507694624554)
(49,5.6897730504877198804)
(50,5.807368853091618888)
(51,5.7517231500933857332)
(52,5.9104364059071363044)
(53,5.8295054889352579853)
(54,5.9994928550073101287)
(55,5.9036754625271541963)
(56,6.0652001775037042974)
(57,5.9647376732471545926)
(58,6.1624546397752200284)
(59,6.0396467236107842957)
(60,6.2647623596366433127)
(61,6.1484366040795672582)
(62,6.4184348375318829503)
(63,6.2561367163930963733)
(64,6.6073037696183075351)
(65,6.4258800521472081064)
(66,6.7461872953881094008)
(67,6.5259092307831202919)
(68,6.922830972947151371)
(69,6.6974384799153583355)
(70,7.1474904481098605705)
(71,6.8934363765983929238)
(72,7.3732260549532162655)
(73,7.1517412956601500556)
(74,7.6749047638078564006)
(75,7.4151687086746290673)
(76,8.0677173120814509187)
(77,7.8315595746377085897)
(78,8.4029003946977388094)
(79,8.0622631028310370953)
(80,8.6611619728094845728)
(81,8.4176861581780763798)
(82,8.9709344556856889345)
(83,8.6309046628297016057)
(84,9.2715221226624636604)
(85,8.9088739849811124118)
(86,9.5945881098045588805)
(87,9.3014250343912912118)
(88,9.9339218842448252217)
(89,9.506054415187768488)
(90,10.237372107828100221)
(91,9.9197284852877174118)
(92,10.5068103758828502393)
(93,10.113600248409274452)
(94,10.8386332722759376179)
(95,10.4528208583577030769)
(96,11.1750203320740132588)
(97,10.8322701918174569876)
(98,11.5888559868333790244)
(99,11.2690204102658720964)
(100,11.9626693269775503556)
(101,11.5780237570088959182)
(102,12.2249136711057184566)
(103,11.8815357962190818187)
(104,12.6621113001697148246)
(105,12.3582647996250826717)
(106,13.1138846009731080768)
(107,12.7047836059373708297)
(108,13.3897962493466019106)
(109,13.088984865514374878)
};

\addplot[domain=0:110]{0.164112869068143737881*x+-4.72609735019904375714};

\addplot[color=black,only marks,mark=+] coordinates
{(8,5.2555165823504259899)
(9,5.0811433277160896911)
(10,5.330111883258135948)
(11,4.9528043832876559675)
(12,5.8212021752175752635)
(13,5.4729178944996093796)
(14,5.8279345182477691878)
(15,5.5265664412837066959)
(16,5.900627161719592799)
(17,5.5703291579092875946)
(18,5.9330998925641385413)
(19,5.5851008057160027974)
(20,5.6353220354357476256)
(21,5.2983686300457377817)
(22,5.5086278297087253744)
(23,4.9484645698674345218)
(24,5.3002644424645382684)
(25,5.1166120832428424889)
(26,5.4022265281836056629)
(27,5.2503973358452085435)
(28,5.5138317967219229501)
(29,5.3019759528123762104)
(30,5.6441619357516811996)
(31,5.471758413443302264)
(32,5.9671173412513947045)
(33,5.475271678650902672)
(34,5.5184276813919595607)
(35,5.1422109891703808402)
(36,5.7515644103850749749)
(37,5.431673000608187706)
(38,5.6721835737196943832)
(39,4.5181959704840667136)
(40,4.9284816043847941937)
(41,4.415625408514847692)
(42,5.3919721374978752655)
(43,4.4193610600982721471)
(44,5.9638832214936480392)
(45,4.9577489894149452093)
(46,5.088996638341587545)
(47,4.0778774206281332134)
(48,4.9071739956477128917)
(49,4.0364687388861967748)
(50,5.2922519454602781713)
(51,4.304272320353849999)
(52,5.0141612650780838254)
(53,4.2919008302810941435)
(54,5.4102984215405241354)
(55,4.1768452497734739183)
(56,4.9987025918869310225)
(57,4.0656387844075055087)
(58,5.1276719916334656587)
(59,3.9738688761623017758)
(60,5.2676829183826555924)
(61,4.206535051554640808)
(62,4.9628204273240231998)
(63,4.0827243160143271265)
(64,5.3948810260452662004)
(65,4.3875401324341002837)
(66,5.0874331334773768752)
(67,4.2025304997147122129)
(68,5.0176545573573209051)
(69,4.1694834041004229401)
(70,5.2345557290046987871)
(71,4.1514913683614900336)
(72,5.1128977732683474005)
(73,4.0315548258145718913)
(74,5.2955943212195358523)
(75,4.6223060657696754076)
(76,6.2817612127763622603)
(77,5.3601011545633325164)
(78,5.748112438908250789)
(79,5.0551509357739340771)
(80,6.718699349108692231)
(81,5.4550619658667945192)
(82,5.3384990876876853557)
(83,4.4253476005428296545)
(84,5.4363033403789133373)
(85,4.6228397820179321065)
(86,6.118467364780361394)
(87,5.1343842719788262578)
(88,5.2942167805125936599)
(89,4.4414235559651214152)
(90,5.5486147807992450405)
(91,4.2667561746799996415)
(92,5.1325659692816197876)
(93,4.3009490450595454928)
(94,5.0964038632597867384)
(95,4.0294234676718028727)
(96,5.0347387243650021973)
(97,4.2226787403743145706)
(98,5.3713021567973920839)
(99,4.3820668133464899536)
(100,5.385200757474133136)
(101,4.24611668632486991)
(102,5.0820143480386263906)
(103,4.3658689678165906399)
(104,5.3661581976718209765)
(105,4.4908481256916761799)
(106,5.364667153531224723)
(107,4.3056564888284823466)
(108,5.4324550890745956233)
(109,4.3273308585665763669)
};

\legend{FIND-HALVES$\quad$,$0.21x-10.41$,MERGE-HALVES$\quad$,$0.16x-4.73$, MIN-T-COUNT}
\end{axis}%
\end{tikzpicture}%
 \caption{\label{fig:time}Average user time $t$ (in milliseconds) required to run different parts of $RCUP$ algorithm for a given T-count.  Each point on the graph was obtained by averaging over T-optimal approximations of $R_z$ rotations by angles $2\pi k/1000$ for $k=1..1000$. }
\end{figure}

On average, the number of $T$ gates needed to achieve a given quality of approximation $\ve$ scales as $3.067\log(1/\ve)-4.322$~(Fig.~\ref{fig:cost}).  The knowledge of this scaling is important for estimating the resources required to run quantum algorithms having $R_z$ rotations as their building block.  The best previous estimate was based on the values of $\log_2(1/\ve)$ less than $14$, due to the inefficiency of the naive brute force approach~\cite{F1}.  Our result, summarized in Fig.~\ref{fig:cost}, results in a much more substantiated numerical evidence for the resource scaling. 

\begin{figure}
 \begin{tikzpicture}%[trim axis left, trim axis right]
\pgfplotsset{
every axis plot post/.append style={
every mark/.append style={scale=0.4}}}
legend style={
}
\begin{axis}[grid=both, xmin = 0, ymin =0, ymax = 150,ytick ={0,20,...,160},xtick ={0,5,...,40},minor tick num=1,
		xlabel=$x{=}\log_{2}(1/\ve)$,ylabel=T-count, legend style={cells={anchor=west},anchor=north,at={(0.5,0.96)} },legend columns=1,y post scale=0.9, x post scale=1 ]
\selectcolormodel{gray}

\addplot[color=black,only marks,mark=*] coordinates
{(4.563952654270077,8)
(4.804911726543429,9)
(5.114512247350719,10)
(5.321699917841492,11)
(5.389381320279023,12)
(5.698633575565129,13)
(5.9851189732734715,14)
(6.219584162577563,15)
(6.628885747307054,16)
(6.944115070272085,17)
(7.2381847679633715,18)
(7.549987534103624,19)
(8.060025759508576,20)
(8.360370164705174,21)
(8.778857270721927,22)
(9.268359849189087,23)
(9.513019579323734,24)
(9.719470550266257,25)
(9.985578903336368,26)
(10.244138012301264,27)
(10.618780271260764,28)
(10.981849008035145,29)
(11.260952646302643,30)
(11.543235584294147,31)
(11.822171576670671,32)
(12.130630677251684,33)
(12.425402679332667,34)
(12.7887190121359,35)
(13.045899150470058,36)
(13.301111053921307,37)
(13.622650272714983,38)
(14.026577667616742,39)
(14.304565193883864,40)
(14.60342270640839,41)
(14.938661864519771,42)
(15.234167631511182,43)
(15.49610636932265,44)
(15.816206512926959,45)
(16.174536907202086,46)
(16.517904500710575,47)
(16.919869464172933,48)
(17.26159248226288,49)
(17.606166198818826,50)
(17.925315655222587,51)
(18.223151402022946,52)
(18.46089652929471,53)
(18.787534534583386,54)
(19.217688580888442,55)
(19.547565213132206,56)
(19.85804629553689,57)
(20.226822399957324,58)
(20.557989331999853,59)
(20.839259606083075,60)
(21.216848673733672,61)
(21.548107426584572,62)
(21.819974525136505,63)
(22.158053575870113,64)
(22.539469613732265,65)
(22.920674700965662,66)
(23.211216381794788,67)
(23.56428739529473,68)
(23.852514383130806,69)
(24.21284462095379,70)
(24.534518951555018,71)
(24.823268717423005,72)
(25.158779040241278,73)
(25.483799783747425,74)
(25.770094723551875,75)
(26.045996658345622,76)
(26.38290822443594,77)
(26.75871332008588,78)
(27.09896310620405,79)
(27.334663992680646,80)
(27.711561316923067,81)
(28.058798732191978,82)
(28.387635965531157,83)
(28.76502109912516,84)
(29.063158579274305,85)
(29.36173241257622,86)
(29.737407527903862,87)
(30.15250003877172,88)
(30.44407376128963,89)
(30.74487275263088,90)
(31.18580088132716,91)
(31.562356369088196,92)
(31.873973924176493,93)
(32.236857672211215,94)
(32.555325458751426,95)
(32.879800921725675,96)
(33.1660671043651,97)
(33.46767079112834,98)
(33.81437791102503,99)
(34.17705693461414,100)
(34.56733203592372,101)
(34.89114579457451,102)
(35.154976315307586,103)
(35.43964513206496,104)
(35.72732770038344,105)
(36.1143369052569,106)
(36.46628563572319,107)
(36.750857547948975,108)
(37.11368317301026,109)
};

\addplot[domain=0:40] {3.06672*x+-4.32208};

\legend{mean,$3.067x-4.322$}
\end{axis}%
\end{tikzpicture}%
 \caption{\label{fig:cost}Scaling of the T-count required to achieve the given approximation precision $\ve$. Each point in the graph shows the average precision that can be achieved when using circuits with the given T-count. The average is taken over T-optimal approximations of $R_z$ rotations by angles $2\pi k/1000$ for $k=1..1000$. }
\end{figure}

%\subsection{$R_z$ rotations used in Quantum Fourier Transform}
We computed optimal circuits for $R_z$ rotations by angles $\pi/2^k$ for $k=3,4,\ldots,27$. These rotations are used in the Quantum Fourier Transform, latter being a common building block for many quantum algorithms.  We found approximations reaching precision up to $10^{-15}$~(Fig.~\ref{fig:qft-cost}).  To make a direct comparison to~\cite{S} we also found optimal approximations of $R_z(0.1)$ using up to 153 T gates and reaching precision $3.18\cdot10^{-16}$~(Fig.~\ref{fig:01-cost}). Computing these approximations of $R_z(0.1)$ took 33.2 hours in total, user time.  Our circuit approximations are about $25\%$ shorter than those obtained using the algorithm from~\cite{S}, and are guaranteed to be optimal. 

All optimal circuits found by our algorithm are available online at \url{https://code.google.com/p/sqct/}. 

\begin{figure}
\begin{tikzpicture}%[trim axis left, trim axis right]
\pgfplotsset{
every axis plot post/.append style={
every mark/.append style={scale=0.4}}}
legend style={
}
\begin{axis}[grid=both, xmin = 0, ymin =0, ymax = 160,ytick ={0,20,...,160},xtick ={0,5,...,50},minor tick num=1,
		xlabel=$x{=}\log_{2}(1/\ve)$,ylabel=T-count, legend style={cells={anchor=west},anchor=north west,at={(0.04,0.96)} },legend columns=1,y post scale=0.9, x post scale=1 ]
\selectcolormodel{gray}

\addplot[color=black,only marks,mark=+] coordinates
{(4.8220783787517612783,0)
(4.8220783787517403233,1)
(4.8220783787517403233,2)
(4.8220783787517403233,3)
(4.8220783787517403233,4)
(4.8220783787517403233,5)
(4.8220783787517403233,6)
(4.8220783787517403233,7)
(4.8220783787517403233,8)
(4.8220783787517403233,9)
(4.8220783787517403233,10)
(5.4507107576468285283,11)
(5.4507107576468285283,12)
(5.4507107576468285283,13)
(5.4507107576468285283,14)
(5.7939969605361974849,15)
(6.8938308572119152835,16)
(6.8938308572119152835,17)
(7.2882698202291964021,18)
(8.0813011720599604508,19)
(8.0813011720599604508,20)
(8.0813011720599604508,21)
(10.7828518316750171181,22)
(10.7828518316750171181,23)
(10.7828518316750171181,24)
(10.7828518316750171181,25)
(10.7828518316750171181,26)
(10.7828518316750171181,27)
(10.7828518316750171181,28)
(10.7828518316750171181,29)
(10.7828518316750171181,30)
(10.7828518316750171181,31)
(10.7828518316750171181,32)
(11.4091189869906281437,33)
(14.2132179676906141125,34)
(14.2132179676906141125,35)
(14.2132179676906141125,36)
(14.2132179676906141125,37)
(14.2132179676906141125,38)
(14.2132179676906141125,39)
(14.2132179676906141125,40)
(14.3680956484066627368,41)
(14.3680956484066627368,42)
(15.1256253341502328286,43)
(15.1742039074049770581,44)
(15.692431044699169021,45)
(16.2054617433031798624,46)
(16.2054617433031798624,47)
(16.322589065407896129,48)
(17.8710769564268143265,49)
(17.8710769564268143265,50)
(17.8710769564268143265,51)
(17.8710769564268143265,52)
(17.9046578139035364311,53)
(18.4993853007177694529,54)
(18.4993853007177694529,55)
(19.2233496753083336367,56)
(19.4110915606160810316,57)
(19.9127533800558694039,58)
(19.9190637306140932426,59)
(20.6261190474897985839,60)
(20.8446240991123016404,61)
(21.0339064485696076023,62)
(24.6771268054120897156,63)
(24.6771268054120897156,64)
(24.6771268054120897156,65)
(24.6771268054120897156,66)
(24.6771268054120897156,67)
(24.6771268054120897156,68)
(24.6771268054120897156,69)
(24.6771268054120897156,70)
(24.6771268054120897156,71)
(24.6771268054120897156,72)
(24.9966292192926952901,73)
(24.9966292192926952901,74)
(25.3918807425696668184,75)
(25.6181913019071107113,76)
(25.6181913019071107113,77)
(27.4636629508497518722,78)
(27.4636629508497518722,79)
(27.5139999913356947762,80)
(29.344976073025341365,81)
(30.1091969883158077497,82)
(30.1091969883158077497,83)
(30.1091969883158077497,84)
(30.1091969883158077497,85)
(30.1091969883158077497,86)
(30.1091969883158077497,87)
(30.3394643301065359042,88)
(31.2083314634160049616,89)
(31.2083314634160049616,90)
(31.2083314634160049616,91)
(31.257311328619367317,92)
(31.257311328619367317,93)
(31.9191111537896898549,94)
(31.9191111537896898549,95)
(33.2406072488529525857,96)
(33.6011451780976015759,97)
(33.6011451780976015759,98)
(33.6011451780976015759,99)
(34.3954840456054201262,100)
(34.3954840456054201262,101)
(34.7866289531466106898,102)
(34.7866289531466106898,103)
(35.0294167256819910876,104)
(35.2416909057247291807,105)
(35.8583421080258836713,106)
(37.243215075389929536,107)
(37.243215075389929536,108)
(37.243215075389929536,109)
(37.243215075389929536,110)
(37.565900596438576428,111)
(38.067268294328434103,112)
(38.362550248271985236,113)
(40.09328630889585972,114)
(40.09328630889585972,115)
(40.09328630889585972,116)
(40.09328630889585972,117)
(40.09328630889585972,118)
(40.54054267617840099,119)
(40.54054267617840099,120)
(40.54054267617840099,121)
(40.54054267617840099,122)
(41.15328855545313798,123)
(43.1076296922912267,124)
(43.1076296922912267,125)
(43.1076296922912267,126)
(43.1076296922912267,127)
(43.1076296922912267,128)
(43.203454535570536,129)
(45.37467614309108,130)
(45.37467614309108,131)
(45.37467614309108,132)
(45.37467614309108,133)
(45.37467614309108,134)
(45.37467614309108,135)
(47.08114605594605,136)
(47.08114605594605,137)
(47.08114605594605,138)
(47.23611800623944,139)
(47.23611800623944,140)
(47.38823969748175,141)
(47.72860166791135,142)
(47.855503172726614,143)
(48.67029523164469,144)
(49.43232547456254,145)
(49.43232547456254,146)
(49.43232547456254,147)
(51.124087036840905,148)
(51.124087036840905,149)
(51.124087036840905,150)
(51.3893473315558,151)
(51.3893473315558,152)
(51.4816265872583,153)
};

\addplot[domain=5:153]{3.07342*x+-5.76405};

\legend{$R_z(0.1)$,3.07x-5.76}
\end{axis}%
\end{tikzpicture}%
\caption{\label{fig:01-cost}Scaling of the T-count required to achieve the given precision $\ve$ in approximating $R_z(0.1)$. }
\end{figure}

\begin{figure}
\begin{tikzpicture}%[trim axis left, trim axis right]
\pgfplotsset{
every axis plot post/.append style={
every mark/.append style={scale=0.4}}}
legend style={
}
\begin{axis}[grid=both, xmin = 0, ymin =0, ymax = 160,ytick ={0,20,...,160},xtick ={0,5,...,50},minor tick num=1,
		xlabel=$x{=}\log_{2}(1/\ve)$,ylabel=T-count, legend style={cells={anchor=west},anchor=north east,at={(0.25,0.96)} },legend columns=1,y post scale=0.9, x post scale=1 ]
\selectcolormodel{gray}

\addplot[color=black,only marks,mark=+] coordinates
{(6.4936844305623524623,0)
(6.493684439044377806,1)
(6.493684439044377806,2)
(6.493684439044377806,3)
(6.6400433104186028815,4)
(6.6400433104186077539,5)
(6.6400433104186077539,6)
(7.1296207538085950195,7)
(7.1296207538085950195,8)
(7.4535473007409629648,9)
(7.6268354848408353627,10)
(7.8878812495020941609,11)
(7.8878812495020941609,12)
(8.2086816187702973873,13)
(8.2086816187702973873,14)
(8.465887515961057935,15)
(8.9867240375305734634,16)
(9.1631975807952200553,17)
(9.4389599843233865793,18)
(9.6804114607693108441,19)
(9.7703445264546151961,20)
(10.1085069147585238439,21)
(10.8739218225129247381,22)
(11.1321428238567494311,23)
(11.250783082190485419,24)
(11.3489153032849493774,25)
(11.4472062267974602538,26)
(11.4950228510430441688,27)
(11.5711414171048755504,28)
(11.699983458216799817,29)
(12.0361554741780340089,30)
(12.259822441267770347,31)
(12.4694586961299447288,32)
(12.8790382194249043812,33)
(13.0079287594337813127,34)
(13.3322000587585177867,35)
(13.7643493555644049511,36)
(13.9117640929517345105,37)
(14.442024278976996121,38)
(14.6313570686607515499,39)
(14.7240407500738438181,40)
(14.9872783323295903286,41)
(15.2833636465116623582,42)
(15.4499746045631105536,43)
(15.7844738658171948669,44)
(16.0058157006859040091,45)
(16.2085532957722437611,46)
(16.6723268041635496914,47)
(16.8149328031755687175,48)
(17.0586440220198181454,49)
(17.6195313812741705333,50)
(17.7395674310459365427,51)
(17.8702933763581122108,52)
(18.4277569079777534587,53)
(18.4777005472145593149,54)
(18.6416561575516670041,55)
(19.0107931466528963395,56)
(19.1827369348237803087,57)
(19.3732508264179724236,58)
(19.4392839179379944135,59)
(19.8763924700064930432,60)
(20.3241389108510261533,61)
(20.5086151753119142565,62)
(21.0533023525955218092,63)
(21.456126165285793326,64)
(21.6943025859934981857,65)
(21.7968741520692065614,66)
(21.8514763736845617261,67)
(22.1190201161289274019,68)
(22.332184435483832599,69)
(22.8596642505362889332,70)
(22.9236043599293281214,71)
(22.9973682379732669592,72)
(23.6331512897420202625,73)
(23.7601374658622872024,74)
(23.9029744524956755733,75)
(24.1935044291295016494,76)
(24.3835115149661237872,77)
(24.6322069484427067976,78)
(24.7596766251326539468,79)
(25.0943812723287961905,80)
(25.754315062281060206,81)
(25.8088697346822554675,82)
(25.8801584527259984759,83)
(26.5950546814923511315,84)
(26.7391061053515086475,85)
(26.7989625775675105211,86)
(27.1582195493445647295,87)
(27.4823347147292486998,88)
(28.1070032652446104485,89)
(28.2894107964032610588,90)
(28.5256013891726696935,91)
(28.6036263975959761319,92)
(29.1041298035324465242,93)
(29.199091543909009695,94)
(29.780815696563541237,95)
(29.9177236525398094662,96)
(29.9454416233660367005,97)
(30.3673681944427584193,98)
(31.0306224500367809041,99)
(31.545791952063120956,100)
(31.5848081717969614293,101)
(32.3702043565249961203,102)
(32.3878792769520441942,103)
(32.9331353731674546134,104)
(33.18559797028282866,105)
(33.5581222905265288403,106)
(33.8001767382862204078,107)
(35.4648812754027721852,108)
(35.6464694107406253119,109)
(36.384327415405903886,110)
(37.241256571307389765,111)
(37.359859824165327369,112)
(37.520820292715374496,113)
(38.198140168843326404,114)
(38.483815803138089277,115)
(38.640562976937315364,116)
(39.120051621565866271,117)
(39.4561804517270388,118)
(39.59704203508333664,119)
(40.44269543238921682,120)
(40.6874044957674871,121)
(40.98262798459013983,122)
(41.42803562938946403,123)
(41.59270876142357,124)
(42.06557745700145,125)
(42.69348431934689,126)
(43.008181262484065,127)
(43.40225975424428,128)
(43.51672022711217,129)
(44.10185006758178,130)
(44.36431390640566,131)
(44.79978736447028,132)
(45.04769309867835,133)
(45.30926076906457,134)
(45.59252340157704,135)
(45.8919953714204,136)
(46.2805659460181,137)
(46.72908084035158,138)
(47.04941837444022,139)
(47.3641724398775,140)
(47.618185099035315,141)
(47.89369669626944,142)
(48.30936927745665,143)
(48.67329203578238,144)
(48.939347055095105,145)
(49.26639636178916,146)
(49.52147599682998,147)
(49.929107467734596,148)
(50.28367263041474,149)
};

\addplot[color=black,only marks,mark=*] coordinates
{(2.8508221333407057217,0)
(2.8508221333407065883,1)
(2.8508221333407065883,2)
(2.8508221333407065883,3)
(3.1603193508712322083,4)
(3.1603193508712431271,5)
(3.1603193508712431271,6)
(3.8490832965184452944,7)
(3.8490832965184452944,8)
(4.6539395100522323603,9)
(4.6539395100522323603,10)
(4.6539395100522323603,11)
(4.6539395100522323603,12)
(5.2709070402822467368,13)
(5.2709070402822467368,14)
(5.5659802586077403782,15)
(6.6260997035079111811,16)
(6.6546593996332280214,17)
(6.940066292118099534,18)
(7.0097767094963103427,19)
(7.0097767094963103427,20)
(7.6094611695276405063,21)
(8.4112734493786904578,22)
(8.4530249274725009952,23)
(8.9845022655779566707,24)
(8.9845022655779566707,25)
(9.1802793584589850389,26)
(9.2197399529144723175,27)
(9.2197399529144723175,28)
(9.2197399529144723175,29)
(9.8485040119965540848,30)
(9.8485040119965540848,31)
(9.8485040119965540848,32)
(10.848503905871888181,33)
(10.848503905871888181,34)
(10.848503905871888181,35)
(11.3585039332200120599,36)
(11.5253992725236930101,37)
(12.0983197213284544866,38)
(12.0983197213284544866,39)
(12.1060447011152359574,40)
(12.3096574150315424589,41)
(12.8485038711508657132,42)
(12.8485038711508657132,43)
(13.5963953167647936003,44)
(13.8485038678356297868,45)
(13.8485038678356297868,46)
(13.8485038678356297868,47)
(13.866972958354050001,48)
(14.3332697660570940672,49)
(14.8485038495813357703,50)
(14.8485038495813357703,51)
(14.8485038495813357703,52)
(15.8485039887780281071,53)
(15.8485039887780281071,54)
(15.8485039887780281071,55)
(16.4778429504912164993,56)
(16.8485034311108083102,57)
(16.8485034311108083102,58)
(16.8485034311108083102,59)
(16.8485034311108083102,60)
(17.8485056615621237818,61)
(17.8485056615621237818,62)
(17.9944122094822904225,63)
(18.8485056615588860904,64)
(18.8485056615588860904,65)
(18.8485056615588860904,66)
(18.8485056615588860904,67)
(18.9551740493845833404,68)
(19.4445285134965217773,69)
(19.8484699749574105884,70)
(19.8484699749574105884,71)
(19.8484699749574105884,72)
(20.8486127319540494699,73)
(20.8486127319540494699,74)
(20.8486127319540494699,75)
(21.5907793866693686243,76)
(21.8488983307318562436,77)
(21.8488983307318562436,78)
(21.8488983307318562436,79)
(21.8488983307318562436,80)
(22.8488983307318435501,81)
(22.8488983307318435501,82)
(22.8488983307318435501,83)
(23.8398147932422370741,84)
(23.8398147932422370741,85)
(23.8398147932422370741,86)
(24.4580615070863401049,87)
(24.4580615070863401049,88)
(25.1046658860502308291,89)
(25.1046658860502308291,90)
(25.1046667681299305633,91)
(25.1046667681299305633,92)
(25.7233683639898392058,93)
(25.7233683639898392058,94)
(26.4351809494189325246,95)
(26.7090205114511954269,96)
(26.7090205114511954269,97)
(26.7090206453120620367,98)
(27.8136010651484811861,99)
(27.8842714440118940777,100)
(27.8842714440118940777,101)
(29.2210893298290930895,102)
(29.2210893298290930895,103)
(29.2681897703421144605,104)
(29.2681897703421144605,105)
(29.4857261883127560369,106)
(29.7737169673815170287,107)
(32.1058658998966824882,108)
(32.1058658998966824882,109)
(33.4684601296397537818,110)
(34.1711649217367768468,111)
(34.1711649217367768468,112)
(34.1713157244759781948,113)
(35.4420090310400053479,114)
(35.7100213445895607537,115)
(35.7100213445895607537,116)
(37.244447499590183614,117)
(37.244447499590183614,118)
(37.244447499590183614,119)
(39.107727141675131031,120)
(39.3580075958181786,121)
(39.87011318060081796,122)
(39.87011318060081796,123)
(39.87011318060081796,124)
(40.5902580865467824,125)
(41.29996733552801846,126)
(41.29996733552801846,127)
(42.03097011588220359,128)
(42.03097011588220359,129)
(43.3075852384609,130)
(43.35303449425017,131)
(43.75899402875597,132)
(44.191360815457486,133)
(44.652410912806324,134)
(44.71790111105983,135)
(44.71790111105983,136)
(45.54664731354731,137)
(46.029027793705346,138)
(46.15930050818602,139)
(46.60025930991698,140)
(46.81567221835308,141)
(47.34461995522389,142)
(47.69375606033871,143)
(48.17668191296848,144)
(48.2344251419992,145)
(48.5197644867457,146)
(48.617489159899336,147)
(48.94246622777017,148)
(49.32442144310873,149)
};

\legend{mean,max}
\end{axis}%
\end{tikzpicture}%
\caption{\label{fig:qft-cost}Scaling of the T-count required to achieve the given approximation precision $\ve$.  This graph shows the average precision that can be achieved when using circuits with the given T-count, as well as maximal T-count required to achieve the given precision.  The results are averaged over T-optimal approximations of $R_z$ rotations by angles $\pi/2^k$ for $k=3..27$. }
\end{figure}

%\subsection{Experimental verification of the algorithm implementation}
To verify correctness of our implementation, we coded a naive brute-force search algorithm that also solves Closest Unitaries Problem.  We ran both algorithms to find all optimal approximations of rotations $R_z(2\pi k/1000)$ for $k \in [0,1000]$ with at most $18$ T gates.  The two algorithms produced identical results.  The verification procedure is a part of SQCT 0.2 and can be executed via command line option~``-B''.

%END EXPERIMENTS

%BEGIN ACK

\section*{Acknowledgments}

Some of the authors were supported in part by the Intelligence Advanced Research Projects Activity (IARPA) via Department of Interior National Business Center Contract number DllPC20l66. The U.S. Government is authorized to reproduce and distribute reprints for Governmental purposes notwithstanding any copyright annotation thereon. Disclaimer: The views and conclusions contained herein are those of the authors and should not be interpreted as necessarily representing the official policies or endorsements, either expressed or implied, of IARPA, DoI/NBC or the U.S. Government.

This material is based upon work partially supported by the National Science Foundation (NSF), during D. Maslov's assignment at the Foundation. Any opinion, findings, and conclusions or recommendations expressed in this material are those of the author(s) and do not necessarily reflect the views of the National Science Foundation.

Michele Mosca is also supported by Canada's NSERC, MPrime, CIFAR, and CFI. IQC and Perimeter Institute are supported in part by the Government of Canada and the Province of Ontario.

We wish to thank Martin Roetteler for many helpful discussions.

%END ACK
\ifCLASSOPTIONcaptionsoff
  \newpage
\fi

%BEGIN BIB
% \bibliographystyle{IEEEtranS}
% \bibliography{IEEEabrv,refs.bib}

\begin{thebibliography}{10}
\providecommand{\url}[1]{#1}
\csname url@samestyle\endcsname
\providecommand{\newblock}{\relax}
\providecommand{\bibinfo}[2]{#2}
\providecommand{\BIBentrySTDinterwordspacing}{\spaceskip=0pt\relax}
\providecommand{\BIBentryALTinterwordstretchfactor}{4}
\providecommand{\BIBentryALTinterwordspacing}{\spaceskip=\fontdimen2\font plus
\BIBentryALTinterwordstretchfactor\fontdimen3\font minus
  \fontdimen4\font\relax}
\providecommand{\BIBforeignlanguage}[2]{{%
\expandafter\ifx\csname l@#1\endcsname\relax
\typeout{** WARNING: IEEEtranS.bst: No hyphenation pattern has been}%
\typeout{** loaded for the language `#1'. Using the pattern for}%
\typeout{** the default language instead.}%
\else
\language=\csname l@#1\endcsname
\fi
#2}}
\providecommand{\BIBdecl}{\relax}
\BIBdecl

\bibitem{quant-ph/0504218}
\BIBentryALTinterwordspacing
P.~Aliferis, D.~Gottesman, and J.~Preskill, ``Quantum accuracy threshold for
  concatenated distance-3 codes,'' \emph{Quantum Information and Computation},
  vol.~6, pp. 97--165, 2006. [Online]. Available:
  \url{http://arxiv.org/abs/quant-ph/0504218}
\BIBentrySTDinterwordspacing

\bibitem{Barenco1995}
\BIBentryALTinterwordspacing
A.~Barenco, C.~Bennett, R.~Cleve, D.~DiVincenzo, N.~Margolus, P.~Shor,
  T.~Sleator, J.~Smolin, and H.~Weinfurter, ``{Elementary gates for quantum
  computation},'' \emph{Physical Review A}, vol.~52, no.~5, pp. 3457--3467,
  Nov. 1995. [Online]. Available: \url{http://arxiv.org/abs/quant-ph/9503016}
\BIBentrySTDinterwordspacing

\bibitem{BGS}
\BIBentryALTinterwordspacing
A.~Bocharov, Y.~Gurevich, and K.~M. Svore, ``Efficient decomposition of
  single-qubit gates into {$V$} basis circuits,'' \emph{Phys. Rev. A}, vol.~88,
  p. 012313, Jul 2013. [Online]. Available:
  \url{http://link.aps.org/doi/10.1103/PhysRevA.88.012313}
\BIBentrySTDinterwordspacing

\bibitem{J2}
\BIBentryALTinterwordspacing
N.~{Cody Jones}, ``{Distillation protocols for Fourier states in quantum
  computing},'' p.~18, Mar. 2013. [Online]. Available:
  \url{http://arxiv.org/abs/1303.3066}
\BIBentrySTDinterwordspacing

\bibitem{J}
\BIBentryALTinterwordspacing
N.~{Cody Jones}, J.~D. Whitfield, P.~L. McMahon, M.-H. Yung, R.~V. Meter,
  A.~Aspuru-Guzik, and Y.~Yamamoto, ``{Faster quantum chemistry simulation on
  fault-tolerant quantum computers},'' \emph{New Journal of Physics}, vol.~14,
  no.~11, p. 115023, Nov. 2012. [Online]. Available:
  \url{http://stacks.iop.org/1367-2630/14/i=11/a=115023?key=crossref.51339e3e7dfc030625bd8da9496cc34b}
\BIBentrySTDinterwordspacing

\bibitem{Coh3}
H.~Cohen, \emph{A Course in Computational Algebraic Number Theory}, ser.
  Graduate Texts in Mathematics.\hskip 1em plus 0.5em minus 0.4em\relax
  Springer, 1993.

\bibitem{Coh2}
------, \emph{Advanced Topics in Computational Number Theory}, ser. Graduate
  Texts in Mathematics.\hskip 1em plus 0.5em minus 0.4em\relax Springer New
  York, 2000.

\bibitem{Coh1}
------, \emph{Number Theory: Volume I: Tools and Diophantine Equations}, ser.
  Graduate Texts in Mathematics.\hskip 1em plus 0.5em minus 0.4em\relax
  Springer, 2007.

\bibitem{PARI}
H.~Cohen, K.~Belabas \emph{et~al.}, ``{PARI/GP}, a computer algebra system,''
  \url{http://pari.math.u-bordeaux.fr}, 1985--2013.

\bibitem{DN}
\BIBentryALTinterwordspacing
C.~M. Dawson and M.~A. Nielsen, ``{The Solovay-Kitaev algorithm},''
  \emph{Quantum Information {\&} Computation}, vol.~6, no.~1, pp. 81--95, May
  2005. [Online]. Available: \url{http://arxiv.org/abs/quant-ph/0505030}
\BIBentrySTDinterwordspacing

\bibitem{DCS}
\BIBentryALTinterwordspacing
G.~Duclos-Cianci and K.~M. Svore, ``Distillation of nonstabilizer states for
  universal quantum computation,'' \emph{Phys. Rev. A}, vol.~88, p. 042325, Oct
  2013. [Online]. Available:
  \url{http://link.aps.org/doi/10.1103/PhysRevA.88.042325}
\BIBentrySTDinterwordspacing

\bibitem{F1}
\BIBentryALTinterwordspacing
A.~G. Fowler, ``Constructing arbitrary steane code single logical qubit
  fault-tolerant gates,'' \emph{Quantum Information \& Computation}, vol.~11,
  no. 9-10, pp. 867--873, Sep. 2011. [Online]. Available:
  \url{http://arxiv.org/abs/quant-ph/0411206}
\BIBentrySTDinterwordspacing

\bibitem{arXiv:0803.0272}
\BIBentryALTinterwordspacing
A.~G. Fowler, A.~M. Stephens, and P.~Groszkowski, ``High-threshold universal
  quantum computation on the surface code,'' \emph{Phys. Rev. A}, vol.~80, p.
  052312, Nov 2009. [Online]. Available: \url{http://arxiv.org/abs/0803.0272}
\BIBentrySTDinterwordspacing

\bibitem{chem}
\BIBentryALTinterwordspacing
I.~Kassal, J.~D. Whitfield, A.~Perdomo-Ortiz, M.-H. Yung, and A.~Aspuru-Guzik,
  ``{Simulating chemistry using quantum computers.}'' \emph{Annual review of
  physical chemistry}, vol.~62, pp. 185--207, Jan. 2011. [Online]. Available:
  \url{http://www.ncbi.nlm.nih.gov/pubmed/21166541}
\BIBentrySTDinterwordspacing

\bibitem{bk:ksv}
A.~Y. Kitaev, A.~H. Shen, and M.~N. Vyalyi, \emph{{Classical and Quantum
  Computation}}, ser. Graduate studies in mathematics, v. 47.\hskip 1em plus
  0.5em minus 0.4em\relax Boston, MA, USA: American Mathematical Society, 2002.

\bibitem{arXiv:1212.0822}
\BIBentryALTinterwordspacing
V.~Kliuchnikov, D.~Maslov, and M.~Mosca, ``Asymptotically optimal approximation
  of single-qubit unitaries by {C}lifford and {T} circuits using a constant
  number of ancillary qubits,'' \emph{Phys. Rev. Lett.}, vol. 110, p. 190502,
  May 2013. [Online]. Available: \url{http://arxiv.org/abs/1212.0822}
\BIBentrySTDinterwordspacing

\bibitem{arXiv:1206.5236}
\BIBentryALTinterwordspacing
------, ``Fast and efficient exact synthesis of single-qubit unitaries
  generated by {Clifford} and {T} gates,'' \emph{Quantum Information \&
  Computation}, vol.~13, no. 7-8, pp. 0607--0630, Jul. 2013. [Online].
  Available: \url{http://arxiv.org/abs/1206.5236}
\BIBentrySTDinterwordspacing

\bibitem{MA}
\BIBentryALTinterwordspacing
K.~Matsumoto and K.~Amano, ``Representation of quantum circuits with clifford
  and $\pi/8$ gates,'' 2008. [Online]. Available:
  \url{http://arxiv.org/abs/0806.3834}
\BIBentrySTDinterwordspacing

\bibitem{bk:nc}
M.~A. {Nielsen} and I.~L. {Chuang}, \emph{Quantum computation and quantum
  information}.\hskip 1em plus 0.5em minus 0.4em\relax New York, NY, USA:
  Cambridge Univ. Press, 2000.

\bibitem{PS}
\BIBentryALTinterwordspacing
A.~Paetznick and K.~M. Svore, ``{Repeat-Until-Success: Non-deterministic
  decomposition of single-qubit unitaries},'' p.~24, Nov. 2013. [Online].
  Available: \url{http://arxiv.org/abs/1311.1074}
\BIBentrySTDinterwordspacing

\bibitem{S}
\BIBentryALTinterwordspacing
P.~Selinger, ``{Efficient Clifford+{T} approximation of single-qubit
  operators},'' Dec 2012. [Online]. Available:
  \url{http://arxiv.org/abs/1212.6253}
\BIBentrySTDinterwordspacing

\bibitem{W}
L.~Washington, \emph{Introduction to Cyclotomic Fields}, ser. Graduate Texts in
  Mathematics.\hskip 1em plus 0.5em minus 0.4em\relax Springer New York, 1997.

\bibitem{chem-small}
\BIBentryALTinterwordspacing
D.~Wecker, B.~Bauer, B.~K. Clark, M.~B. Hastings, and M.~Troyer, ``{Can quantum
  chemistry be performed on a small quantum computer?}'' p.~6, Dec. 2013.
  [Online]. Available: \url{http://arxiv.org/abs/1312.1695}
\BIBentrySTDinterwordspacing

\bibitem{WK}
\BIBentryALTinterwordspacing
N.~Wiebe and V.~Kliuchnikov, ``{Floating point representations in quantum
  circuit synthesis},'' \emph{New Journal of Physics}, vol.~15, no.~9, p.
  093041, Sep. 2013. [Online]. Available:
  \url{http://stacks.iop.org/1367-2630/15/i=9/a=093041?key=crossref.a96dd3d97b435b66541d76a6bab6451d}
\BIBentrySTDinterwordspacing

\end{thebibliography}

% Generated by IEEEtranS.bst, version: 1.13 (2008/09/30)

%END BIB

%BEGIN BIO

% 
% 
% \begin{IEEEbiography}{Vadym Kliuchnikov}
% 
% \end{IEEEbiography}
% 
% \begin{IEEEbiography}{Dmitri Maslov}
% 
% \end{IEEEbiography}
% 
% \begin{IEEEbiography}{Michele Mosca}
% 
% \end{IEEEbiography}
% 
% \vfill

%END BIO

% %BEGIN APPENDIX 
% \appendix
% \subsection*{Using PARI/GP to solve norm equations and factor numbers[Optional]}
% 
% Put examples of PARI/GP code for:
% \begin{itemize}
%  \item Solving norm equation $\Q[\st]/\Q$
%  \item Solving norm equation $\Q[\w]/\Q[\st]$
%  \item Factoring numbers
% \end{itemize}
% \subsection*{Implementation details of C++ class for elements of $\Zw$[Optional]}
% \subsection*{Example of reconstructing unitaries}
% This is to give some intuition about $\T_k(x)$, what unitaries are minimal and not, how many elements $U[x,y,k]$ in $\Um$ can be with fixed $x,k$.
% %END APPENDIX 
\end{document}